\newcommand{\changed}[1]{#1}
\newcommand{\changedb}[1]{#1}
\newcommand{\changedc}[1]{#1}
\DeclareMathOperator{\sign}{sign}
\begin{document}

\begin{frontmatter}

\title{Proper Implicit Discretization of the Super-Twisting Controller---without and with Actuator Saturation}
\author[IRTCD,IRT]{Richard Seeber}\ead{richard.seeber@tugraz.at},
\author[IRT]{Benedikt Andritsch}\ead{benedikt.andritsch@tugraz.at}\thanks{\begingroup\color{red}
This is the accepted manuscript version of the paper: R. Seeber, B. Andritsch. ``Proper implicit discretization of the super-twisting controller---without and with actuator saturation'', Automatica 173, Article No. 112027, 2025, available as open access at \url{https://dx.doi.org/10.1016/j.automatica.2024.112027}. \textbf{Please cite the publisher's version.}\endgroup}
    \thanks{The financial support by the Christian Doppler Research Association, the Austrian Federal Ministry for Digital and Economic Affairs and the National Foundation for Research, Technology and Development is gratefully acknowledged.}\address[IRTCD]{Christian Doppler Laboratory for Model Based Control of Complex Test Bed Systems, Institute of Automation and Control, Graz University of Technology, Graz, Austria}\address[IRT]{Institute of Automation and Control, Graz University of Technology, Graz, Austria}

\begin{keyword}
    implicit discretization; super-twisting algorithm; conditioning technique; actuator saturation
\end{keyword}

\begin{abstract}
The discrete-time implementation of the super-twisting sliding mode controller for a plant with disturbances with bounded slope, zero-order hold actuation, and actuator constraints is considered.
Motivated by restrictions of existing implicit or semi-implicit discretization variants, a new proper implicit discretization for the super-twisting controller is proposed.
This discretization is then extended to the conditioned super-twisting controller, which mitigates windup in presence of actuator constraints by means of the conditioning technique.
It is proven that the proposed controllers achieve best possible worst-case performance subject to similarly simple stability conditions as their continuous-time counterparts.
Numerical simulations and comparisons demonstrate and illustrate the results.
 \end{abstract}

\end{frontmatter}

\begingroup
\def\QED{$\blacksquare$}
\let\oldepsilon\epsilon
\let\oldphi\phi
\renewcommand{\epsilon}{\varepsilon}
\renewcommand{\phi}{\varphi}
\newcommand{\ee}{\mathrm{e}}

\newcommand{\abs}[1]{\left\lvert #1 \right\rvert}

\newcommand{\lambdamin}{\lambda\sbrm{min}}
\newcommand{\lambdamax}{\lambda\sbrm{max}}
\newcommand{\Ki}{\K\sbrm{I}}
\newcommand{\ki}{k\sbrm{I}}
\def\us{\u^{*}}
\newcommand{\wpmax}{L}
\newcommand{\wmax}{W}
\newcommand{\umax}{U}

\newcommand{\diffd}{\mathrm{d}}
\newcommand{\dt}[1]{\deriv{#1}{t}}
\newcommand{\deriv}[2]{\frac{\diffd {#1}}{\diffd  {#2}}}
\newcommand{\derivk}[3]{\frac{\diffd^{#3} {#1}}{\diffd  {#2}^{#3}}}
\newcommand{\pderiv}[2]{\frac{\partial{#1}}{\partial{#2}}}
\newcommand{\pderivs}[3]{\frac{\partial^2{#1}}{\partial{#2}\partial{#3}}}
\newcommand{\pderivk}[3]{\frac{\partial^{#3} {#1}}{\partial{#2}^{#3}}}

\newcommand{\ceil}[1]{\left\lceil{#1}\right\rceil}
\newcommand{\floor}[1]{\left\lfloor{#1}\right\rfloor}
\newcommand{\norm}[2][]{\left\lVert #2 \right\rVert_{#1}}

\newcommand{\spow}[2]{\left\lfloor #1 \right\rceil^{#2}}
\newcommand{\spowf}[3]{\spow{#1}{\frac{#2}{#3}}}
\newcommand{\apow}[2]{\abs{#1}^{#2}}
\newcommand{\apowf}[3]{\apow{#1}{\frac{#2}{#3}}}

\newcommand{\integ}[4]{\int_{#2}^{#3}{{#4} \, \diffd {#1}}}
\newcommand{\integeq}[4]{\int_{{#1}={#2}}^{#3}{{#4} \, \diffd {#1}}}

\newcommand{\sbrm}[1]{\sb{\mathrm{#1}}}

\newcommand{\TT}{^{\mathrm{T}}}
\newcommand{\HH}{^{\mathrm{H}}}

\def\x{\mathbf{x}}
\def\h{\mathbf{h}}
\newcommand{\vzeta}{\bm{\zeta}}
\newcommand{\vxi}{\bm{\xi}}

\newcommand{\RR}{\mathbb{R}}
\newcommand{\CC}{\mathbb{C}}
 \newcommand{\NN}{\mathbb{N}}
\newcommand{\ZZ}{\mathbb{Z}}
\newcommand{\mex}{\hfill$\triangle$}
\newcommand{\sat}{\operatorname{sat}}
\renewcommand{\F}{\mathcal{F}}
\section{Introduction}

Sliding mode control \changed{(SMC)} is a 
robust control technique for systems that contain uncertainties.
In continuous time, \changed{SMC} manages to completely reject disturbances that fulfill certain requirements after a finite convergence time, see e.g.~\citep{shtessel2013}.
However, implementing \changed{SMC} in practice is not straightforward and often leads to chattering due to discretization and measurement noise, as shown by~\cite{levant2011}.
One approach to reduce chattering effects is higher-order \changed{SMC}. A popular second-order sliding mode controller that rejects disturbances with bounded derivative is the Super-Twisting Controller (STC) introduced by~\cite{levant1993sliding}.

In~\citep{koch2019a} the authors \changed{proposed} a discretization of the super-twisting algorithm, i.e., the closed-loop system with the STC, on the basis of an eigenvalue mapping from continuous-
to discrete-time. They \changed{applied} an implicit and an exact
mapping yielding an implicit and a matching discretization, respectively, mitigating chattering to some extent.
\cite{hanan2021} \changed{proposed} a low-chattering discretization of \changed{SMC} that \changed{is based on} an explicit discretization and significantly reduces chattering effects.
An implicit discretization of the STC that avoids discretization chattering based on ideas from~\citep{acary2010} was proposed by~\cite{brogliato2020implicit}.
Another discrete-time representation of the STC that is based on a semi-implicit discretization and also avoids discretization chattering was developed by~\cite{xiong2022discrete}.
More recently, a modified implicitly discretized STC was proposed by~\cite{andritsch2024modified}, where the authors also performed detailed comparisons between existing discretizations of the STC. 

Apart from the need for a discrete-time implementation, real-life plants in practice also often have limitations regarding the control input that can be applied. The STC includes controller dynamics, which in combination with a saturated control input can lead to windup effects in the controller. This may diminish the control performance, e.g. by increased convergence times or large overshoot of the system states. A method to avoid windup is the so-called conditioning technique by~\cite{hanus1987conditioning}.
For the case of saturated control, the conditioning technique was applied to the continuous-time STC by~\cite{seeber2020conditioned}.
\changed{\cite{Reichhartinger_2023} applied anti-windup schemes directly to discrete-time realizations of the STC, e.g. \citep{koch2019a}.}
In~\cite{yang2022semi} the authors applied the semi-implicit discretization by~\cite{xiong2022discrete} to  \changed{the}  conditioned STC, obtaining a discrete-time implementation of the STC with windup mitigation. 

Compared to the continuous-time STC, the discussed discrete-time implementations without and with actuator saturation exhibit various restrictions.
In particular, the discrete-time STC by~\cite{brogliato2020implicit} can handle a smaller class of disturbances; this will be formally demonstrated later on and is also shown in the comparative results in~\citep{andritsch2024modified}.
The latter comparisons also show that the discretization by~\cite{xiong2022discrete} is harder to tune, because certain parameter selections result in a significantly increased convergence time.
These shortcomings are avoided by~\cite{andritsch2024modified}; however, their discretization lacks a proof of global \changed{closed-loop} stability in the presence of a disturbance, \changed{as do the discrete-time controllers with saturation studied in \citep{Reichhartinger_2023}}.
The conditioned discrete-time STC by~\cite{yang2022semi}, finally, has similar drawbacks as the unconditioned semi-implicitly discretized STC by~\cite{xiong2022discrete} and additionally does not necessarily achieve the best possible worst-case error, as shown in~\citep{seeber2023discussion}.

The present paper derives a novel discretization of the STC that does not have the disadvantages of previously proposed discretizations and is shown to yield the best possible worst-case control error. Furthermore, a complete stability proof and simple stability conditions are provided, extending those given by~\cite{brogliato2020implicit}, in addition to extending the class of perturbations.
For the case of saturated actuation, the conditioning technique is applied to the proposed discrete-time STC, yielding an implicit discretization of the conditioned STC.
Also for this case, stability conditions are derived that are very similar to those obtained in~\citep{seeber2020conditioned} for the continuous-time case.
\changed{Moreover, explicit controller realizations are derived, in a similar fashion as recently noted by \cite{brogliato2023comments} for the first-order controllers by \cite{haddad2020finite}.}

The paper is structured as follows.
Section~\ref{sec:problem} introduces and motivates the problem statement based on existing approaches in literature.
\changed{Sections~\ref{sec:implicit} and~\ref{sec:conditioned} present the main results: implementations and formal guarantees for the implicit STC in absence and presence of actuator saturation, respectively.}
\changed{Sections~\ref{sec:explicit} and~\ref{sec:stability} then present the corresponding derivations, with the derivation of explicit control laws being contained in the former and the stability analysis being performed in the latter.}
Section~\ref{sec:simulation} demonstrates the effectiveness of the proposed discrete-time controllers by means of simulations and \changed{comparisons}.
Section~\ref{sec:conclusion}, finally, concludes the paper.

\textbf{Notation:}
$\RR$, $\RR_{\ge 0}$, $\RR_{>0}$, $\ZZ$, $\NN$, and $\NN_0$ denote reals, nonnegative reals, positive reals, integers, positive integers, and nonnegative integers.
\changedc{For $M \in \RR_{\ge 0}$, define the saturation function $\sat_M : \RR \to [-M,M]$ as $\sat_M(y) = \max\{-M,\min\{M,y\}\}$.} \changedc{The scalar-valued sign function with $\sign(y) = \frac{y}{|y|}$ for $y\neq 0$ and $\sign(0) = 0$ is used, where $|y|$ denotes the absolute value of $y$.}
For $y,p \in \RR$, $p\ne0$, the abbreviation $\spow{y}{p} = \apow{y}{p} \sign(y)$ is used, and $\spow{y}{0}$ denotes the set-valued sign function defined as $\spow{y}{0} = \{ \sign(y) \}$ for $y \ne 0$ and $\spow{0}{0} = [-1,1]$. The real-valued mod-operator $a \text{ mod } b = r$, where $a, b \in \RR$, $b \ne 0$,  \changed{is the unique $r \in [0,|b|)$} fulfilling $a = r + k b$ with $k \in \mathbb{Z}$.
\changedb{The set of all subsets of a set $\mathcal{S}$ is denoted by $2^{\mathcal{S}}$.}
\changedc{Recall that a scalar-valued function $g : \RR^n \to \RR$ is called upper-semicontinuous, if $\limsup_{\vxi \to \x} g(\vxi) \le g(\x)$ holds for all $\x \in \RR^n$, and a set-valued function $\mathcal{F} : \RR^n \to 2^{\RR^n}$ is called upper-semicontinuous, if $\lim_{\vxi \to \x} \sup_{\vzeta_1 \in \mathcal{F}(\vxi)} \inf_{\vzeta_2 \in \mathcal{F}(\x)} \norm{\vzeta_1 - \vzeta_2} = 0$ holds for all $\x \in \RR^n$, cf. e.g. \cite{polyakov2014stability}.}

\section{Motivation and Problem Statement}
\label{sec:problem}

\subsection{Zero-Order Hold Sampled Sliding Mode Control}

Consider a \changed{scalar} sliding variable \changed{$x \in \RR$} governed by
\begin{equation}
    \label{eq:plant}
    \dot x = u + w
\end{equation}
with a control input $u : \RR_{\ge 0} \to \RR$ generated by a zero-order hold element with sampling time $T \in \RR_{> 0}$ from a control input sequence $(u_k)$\changed{, $k \in \NN_0$,} according to
\begin{equation}
    \label{eq:zoh}
    u(t) = u_k \qquad \text{for $t \in [kT, (k+1)T)$}
\end{equation}
and an unknown disturbance $w : \RR_{\ge 0} \to \RR$ whose slope and amplitude are bounded by $|\dot w(t)| \le L \in \RR_{\ge 0}$ almost everywhere and $|w| \le W \in \RR_{\ge 0} \cup \{ \infty \}$ for all $t \in \RR_{\ge 0}$.
The control goal is to drive the sliding variable to a vicinity of the origin that is as small as possible, considering that the disturbance is unknown.
To that end, \changed{consider the samples} $x_k = x(kT)$ \changed{with $k \in \NN_0$} and define 
\begin{equation}
    \label{eq:wk}
    w_k = \frac{1}{T} \int_{kT}^{(k+1)T} w(\tau) \, \diffd \tau
\end{equation}
to obtain the zero-order hold discretization of \eqref{eq:plant} as
\begin{equation}
    \label{eq:plant:discrete}
    x_{k+1} = x_k + T (u_k + w_k).
\end{equation}
It is easy to verify that the disturbance $w_k$ therein satisfies $|w_{k+1} - w_{k}| \le LT$ and $|w_k| \le W$ for all $k \in \NN_0$.

The following motivating proposition shows a lower bound on the worst-case disturbance rejection.
It is proven in \changed{Appendix~\ref{proof-prop-best}}.
\begin{prop}
    \label{prop:best}
    Let $L \in \RR_{\ge 0}$, $T \in \RR_{> 0}$, $W \ge \frac{3 LT}{2}$ and consider the continuous-time plant \eqref{eq:plant} with zero-order hold input \eqref{eq:zoh} and sampled output $x_k = x(kT)$.
    Then, for every initial condition $x_0 \in \RR$, for every causal control law, i.e., for every sequence $(h_k)$ of functions $h_k : \RR^{k+1} \to \RR$ such that
    $
        u_k = h_k(x_k, x_{k-1}, \ldots, x_0),
        $
    and for every $K \in \NN$ there exists a Lipschitz continuous disturbance $w : \RR_{\ge 0} \to \RR$ satisfying $|w(t)| \le W$ and $|\dot w(t)| \le L$ almost everywhere such that
    \begin{equation}
        \changed{\sup_{t \ge KT} |x(t)| \ge} \sup_{k \ge K} |x_k| \ge L T^2
    \end{equation}
    holds for the corresponding closed-loop trajectory.
\end{prop}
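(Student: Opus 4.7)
I plan to use a classical indistinguishable-twin adversary argument. Fix any causal control law $(h_k)$, any initial condition $x_0 \in \RR$, and any $K \in \NN$. The goal is to exhibit a pair of admissible Lipschitz disturbances $w^+,w^-$ whose sampled averages satisfy $w_k^+ = w_k^-$ for $k = 0,\ldots,K-2$ but $w_{K-1}^+ - w_{K-1}^- = 2LT$. An easy induction using causality of $(h_k)$ then yields identical sampled trajectories through step $K-1$, after which \eqref{eq:plant:discrete} gives $x_K^+ - x_K^- = T(w_{K-1}^+ - w_{K-1}^-) = 2LT^2$. The triangle inequality then forces $\max\{|x_K^+|,|x_K^-|\} \ge LT^2$, so whichever of $w^+,w^-$ realises this maximum is the required disturbance; the continuous-time inequality $\sup_{t \ge KT}|x(t)| \ge \sup_{k \ge K}|x_k|$ is immediate from $x(kT) = x_k$.

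The construction hinges on the common difference $v = w^+ - w^-$: taking the antisymmetric choice $w^\pm = \pm v/2$ reduces admissibility of both disturbances to $|\dot v| \le 2L$ and $|v| \le 2W$. I take $v$ continuous and piecewise linear with slopes $\pm 2L$ and alternating boundary values $v(kT) = (-1)^{K-1-k} L T$ for $k = 0,\ldots,K-1$, appended by one final linear segment of slope $+2L$ on $[(K-1)T,KT]$ reaching $v(KT) = 3LT$, and then by the constant $3LT$ for $t > KT$. Linearity between the antipodal values $\pm LT$ on each of the first $K-1$ subintervals makes $\int_{kT}^{(k+1)T} v(\tau) \, \diffd \tau = 0$, which is exactly $w_k^+ = w_k^-$; the trapezoidal rule on the final subinterval yields $v_{K-1} = \tfrac12(LT + 3LT) = 2LT$ as required.

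Admissibility is then routine: the slopes are bounded by $\pm 2L$, so $|\dot w^\pm| \le L$; and the global maximum of $|v|$ is $3LT$, so $|w^\pm| \le 3LT/2 \le W$. This is precisely where the hypothesis $W \ge 3LT/2$ enters, to give the adversary just enough amplitude room to realise the sawtooth $v$ with $\pm L T$ boundary values and a final overshoot to $3LT$. I expect the main (minor) obstacle to be the sign bookkeeping that links the alternating-sign sawtooth $v$ to the indistinguishability property, together with the simultaneous verification that $w^\pm = \pm v/2$ respect both the slope and the amplitude constraint; once the sawtooth is written down, the zero-integral and extremal-average calculations follow by inspection.
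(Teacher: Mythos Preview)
Your proposal is correct and follows essentially the same indistinguishable-twin adversary argument as the paper: both construct a pair of admissible sawtooth disturbances $w^\pm = \pm v/2$ whose sampled averages agree on an initial block and then diverge by $2LT$, forcing one of the two resulting sampled states to have magnitude at least $LT^2$. The only cosmetic differences are that the paper parametrizes the pair by a sign $q \in \{-1,1\}$ and branches two steps later (bounding $|x_{K+2}|$ rather than $|x_K|$), whereas your triangle-inequality formulation and earlier branching are slightly more direct.
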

\begin{figure}
	\centering
	\includegraphics[width=\linewidth]{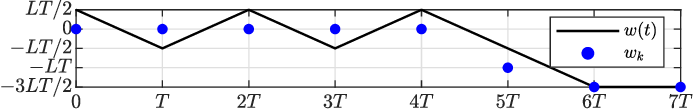}
	\caption{Example for a disturbance signal $w(t)$ and corresponsing $w_k$ limiting the worst-case error as in Proposition~\ref{prop:best}.}\label{fig:proposition:disturbance}
\end{figure}
\begin{rem}
    Fig.~\ref{fig:proposition:disturbance} shows \changed{a disturbance} $w(t)$ obtained from the proof \changed{of Proposition~\ref{prop:best}} with $K=4$ and corresponding $w_k$ that \changed{leads} to the state $x_k$ reaching the best possible worst-case bound $L T^2$.

    In \emph{continuous time}, a well-known sliding mode controller \changed{for the plant \eqref{eq:plant}} is the super-twisting controller
\begin{align}
    \label{eq:sta}
    u &= -k_1 \spowf{x}{1}{2} + v, &
    \dot v &= -k_2 \sign(x)
    \end{align}
proposed by \cite{levant1993sliding}, with positive parameters $k_1, k_2$ and trajectories understood in the sense of \cite{filippov1988differential}.
The goal of the present paper is to obtain a discrete-time implementation of this controller for the sampled control problem such that
\begin{itemize}
    \item 
        the optimal worst-case performance from Proposition~\ref{prop:best} is attained in finite time, i.e., such that \changed{$|x(t)| \le LT^2$ holds after a finite time}, and
    \item
        this optimal performance is maintained also in the presence of actuator saturation while, additionally, controller windup is mitigated.
\end{itemize}

Arguably, the most promising approaches for achieving these goals are the implicit or semi-implicit discretization techniques due to \cite{brogliato2020implicit,xiong2022discrete} in combination with the conditioning technique for windup mitigation proposed by \cite{hanus1987conditioning}.
In the following, the state-of-the-art solutions in that regard are discussed to motivate the present work.

\subsection{State-of-the-Art Implicit Super-Twisting Control}

\cite{brogliato2020implicit} propose an implicit super-twisting controller
given by the generalized equations
\begin{subequations}
    \label{eq:impl0}
    \begin{align}
    \label{eq:impl0:u}
        u_k &= -k_1 \spowf{x_k + T u_k}{1}{2} + v_{k+1} \\
        v_{k+1} &\in v_k -k_2 T \spow{x_k + T u_k}{0}.
    \end{align}
\end{subequations}
\changed{However, they perform the discretization considering only the unperturbed case.
As a consequence, this controller---unlike the continuous-time super-twisting controller---is not capable of rejecting unbounded disturbances with bounded slope.}
To see this, note that with abbreviations $w_{-2} := w_0$, $w_{-1} := w_0$ the trajectory
$
    x_k = T w_{k-1},
u_k = -w_{k-1},
v_k = -w_{k-2}
    $
is always a solution of the closed loop \eqref{eq:plant:discrete}, \eqref{eq:impl0}, provided that $k_2 > L$.
As a consequence, \changed{the controller \eqref{eq:impl0}} can \changed{only} guarantee that $|x_k| \le WT$ holds after a finite number of steps.

\subsection{Semi-Implicit Conditioned Super-Twisting Control}

Consider now the case of actuator saturation, i.e., the case that the control input $u$ is bounded by some control input bound $U \in \RR_{> 0}$ with $U > W$.
In this case, the classical super-twisting controller \eqref{eq:sta} may suffer from a windup effect that deteriorates control performance.

A continuous-time control law that mitigates this windup effect is the conditioned super-twisting controller proposed by \cite{seeber2020conditioned}.
Its control law is obtained by applying the conditioning technique by \cite{hanus1987conditioning} to \eqref{eq:sta} and is given by
\begin{subequations}
    \label{eq:sta:conditioned}
    \begin{align}
        \bar u &= -k_1 \spowf{x}{1}{2} + v \\
        u &= \sat_{U}(\bar u) = \begin{cases}
            \bar u & |\bar u| \le \umax \\
            U \sign(\bar u) & |\bar u| > \umax
        \end{cases}\\
    \label{eq:sta:conditioned:v}
        \dot v &= -k_2 \sign(v - u).
    \end{align}
\end{subequations}
\cite{yang2022semi} propose a semi-implicit discretization of this controller.
However, as shown in \citep{seeber2023discussion}, that discretization may suffer from limit cycles which deteriorate the achievable performance compared to the unsaturated case.

The present paper proposes new, proper implicit discretizations of both, the super-twisting controller and the conditioned super-twisting controller, such that the best possible worst-case performance shown in Proposition~\ref{prop:best} is achieved in either case.

\section{Implicit Super-Twisting Control \\ without Actuator Saturation}
\label{sec:implicit}

In the following, a new implicit discretization of the super-twisting controller with best possible worst-case disturbance rejection is derived.
First, note that \eqref{eq:impl0} drives the modified sliding variable $\tilde x_k = x_k - T w_{k-1}$ to zero, \changed{i.e., the variable $\tilde x_k$ defines the discrete-time sliding mode}, \changedc{cf. \cite{acary2012chattering}}. This variable satisfies
\begin{subequations}
    \begin{align}
        \label{eq:xtildekp1:1}
        \tilde x_{k+1} &= x_k + T u_k \\
        \label{eq:xtildekp1:2}
        &= \tilde x_{k} + T (u_k + w_{k-1}).
    \end{align}
\end{subequations}
In \citep{brogliato2020implicit}, this variable was chosen such that \eqref{eq:xtildekp1:1} does not depend on $w_k$, because otherwise the unknown quantity $w_k$ would be needed to compute $u_k$.

From \eqref{eq:xtildekp1:2}, one may see that $v_{k+1}$ in \eqref{eq:impl0:u} eventually compensates for $w_{k-1}$.
Using this intuition and the fact that the difference of two successive disturbance values satisfies $|w_{k-1} - w_{k-2}|\le LT$, an alternative modified sliding variable is proposed as
\begin{align}
    \label{eq:zk}
    z_k &= x_k - T(w_{k-2} + v_k) - T(w_{k-1} - w_{k-2}) \nonumber \\
    &= x_k - T (w_{k-1} + v_k).
\end{align}
If $z_k$ and $v_k+w_{k-2}$ are driven to zero, then $x_k$ satisfies the desired bound $|x_k| \le T |w_{k-1} - w_{k-2}| \le LT^2$.
By combining~\eqref{eq:plant:discrete} and~\eqref{eq:zk} to
\begin{subequations}
    \begin{align}
        \label{eq:zkp1:1}
        z_{k+1} &= x_k + T(u_k - v_{k+1})  \\
        \label{eq:zkp1:2}
        &= z_k + T (u_k + w_{k-1} + v_k - v_{k+1}),
\end{align}
\end{subequations}
one can see that the proposed modified sliding variable $z_k$ also has the property that the prediction \eqref{eq:zkp1:1} does not depend on the unknown quantity $w_k$.
The controller~\eqref{eq:impl0} contains the term $v_{k+1}$ to compensate for $w_{k-1}$ in ~\eqref{eq:xtildekp1:2}. This suggests that the control law that drives $z_k$ in~\eqref{eq:zkp1:2} to zero should contain the term $2 v_{k+1} - v_k$ to compensate for $w_{k-1} + v_k - v_{k+1}$.
The proposed proper implicit discretization of the super-twisting controller without actuator saturation is hence given by
\begin{subequations}
    \label{eq:impl3}
    \begin{align}
        \label{eq:impl3:u}
        u_{k} &= -k_1 \spowf{z_{k+1}}{1}{2} + 2 v_{k+1} - v_k \\
        \label{eq:impl3:v}
        v_{k+1} &\in v_k - T k_2  \spow{z_{k+1}}{0}.
    \end{align}
\end{subequations}
The next two theorems show how to implement this control law in explicit form and give closed-loop stability conditions.
Their proofs are given in Sections~\ref{sec:explicit:unsaturated} and~\ref{sec:stability:unsaturated}.
\begin{thm}
    \label{th:explicit}
    Let $k_1, k_2, T \in \RR_{> 0}$ and define the abbreviation $\lambda = k_2 - \frac{k_1^2}{4}$.
    Then, the explicit form of the implicit control law \eqref{eq:impl3}, i.e., the unique solution $u_k, v_{k+1}$ of the system of generalized equations \eqref{eq:zkp1:1} and \eqref{eq:impl3}, is given by
    \begin{subequations}
            \label{eq:impl3:explicit}
        \begin{align}
            \label{eq:impl3:explicit:u}
            &u_k = \begin{cases}
v_k -  \left(2 \lambda T + k_1 \sqrt{|x_k| - \lambda T^2} \right ) \sign(x_k) & \frac{|x_k|}{T^2} > k_2 \\
                v_k - \frac{2 x_k}{T} & \frac{|x_k|}{T^2} \le k_2
            \end{cases} \\
            \label{eq:impl3:explicit:v}
            &v_{k+1} = \begin{cases}
                v_k - T k_2 \sign(x_k) & \frac{|x_k|}{T^2} > k_2 \\
                v_k - \frac{x_k}{T} & \frac{|x_k|}{T^2} \le k_2
            \end{cases}
        \end{align}
    \end{subequations}
    for every given $x_k, v_k \in \RR$.
\end{thm}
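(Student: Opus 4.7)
The plan is to substitute the expression for $u_k$ from \eqref{eq:impl3:u} into the prediction equation \eqref{eq:zkp1:1}. Using $2v_{k+1} - v_k - v_{k+1} = v_{k+1} - v_k$ and then using \eqref{eq:impl3:v} to eliminate $v_{k+1} - v_k$, this reduces the three generalized equations in the two unknowns $u_k, v_{k+1}$ to a single scalar generalized equation for $z_{k+1}$,
\begin{equation*}
    x_k \in z_{k+1} + T k_1 \spowf{z_{k+1}}{1}{2} + T^2 k_2 \spow{z_{k+1}}{0}.
\end{equation*}
Once a solution $z_{k+1}$ of this inclusion has been identified, $v_{k+1}$ follows from \eqref{eq:impl3:v} together with \eqref{eq:zkp1:1} (which together determine $v_{k+1}$ uniquely once $z_{k+1}$ is fixed), and $u_k$ then follows from \eqref{eq:impl3:u}.

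For uniqueness, the key observation is that the right-hand side of the displayed inclusion, viewed as a set-valued map in $z_{k+1}$, is the sum of the continuous strictly increasing single-valued map $z \mapsto z + T k_1 \spowf{z}{1}{2}$ and the maximal monotone set-valued map $z \mapsto T^2 k_2 \spow{z}{0}$; this sum is itself maximal monotone with range $\RR$, so given $x_k \in \RR$ the value $z_{k+1}$ is determined uniquely.

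To obtain the explicit expressions I would split into three cases driven by $\sign(z_{k+1})$. If $z_{k+1} > 0$, the inclusion collapses to the quadratic equation $s^2 + T k_1 s + T^2 k_2 - x_k = 0$ in $s = \sqrt{z_{k+1}}$; picking the nonnegative root and using the abbreviation $\lambda = k_2 - k_1^2/4$ gives $s = -\tfrac{T k_1}{2} + \sqrt{x_k - \lambda T^2}$, and $s \ge 0$ holds precisely when $x_k > T^2 k_2$. The case $z_{k+1} < 0$ is symmetric by sign reversal, yielding the upper branch of \eqref{eq:impl3:explicit} after collecting terms and using $2 T k_2 - T k_1^2/2 = 2 T \lambda$. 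If $z_{k+1} = 0$, then $\spowf{z_{k+1}}{1}{2} = 0$ and \eqref{eq:zkp1:1} forces $v_{k+1} - u_k = x_k/T$; combined with \eqref{eq:impl3:u} this gives the lower branch of \eqref{eq:impl3:explicit}, and admissibility $v_{k+1} - v_k \in -T k_2 [-1,1]$ translates into the bound $|x_k|/T^2 \le k_2$.

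The main obstacle is mostly bookkeeping: making sure the case boundary $|x_k|/T^2 = k_2$ agrees between branches (both branches yield the same $u_k, v_{k+1}$ there) and verifying that the three cases partition $\RR$ and each yields the admissible sign of $z_{k+1}$. A final sanity check is that the $v_{k+1}$ produced in each branch lies in the prescribed set $v_k - T k_2 \spow{z_{k+1}}{0}$, which follows directly from the case definitions.
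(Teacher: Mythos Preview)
Your proposal is correct and follows essentially the same route as the paper: reduce \eqref{eq:zkp1:1}, \eqref{eq:impl3} to the scalar inclusion $x_k \in z_{k+1} + T k_1 \spowf{z_{k+1}}{1}{2} + T^2 k_2 \spow{z_{k+1}}{0}$, solve it by case distinction on $\sign(z_{k+1})$ via the quadratic in $\sqrt{|z_{k+1}|}$, and then back-substitute into \eqref{eq:impl3:v} and \eqref{eq:impl3:u}. The only difference is cosmetic: you justify uniqueness via maximal monotonicity of the right-hand side, whereas the paper argues it directly from the case analysis (and explicitly notes the monotone-operator viewpoint as an alternative in a remark).
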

\begin{rem}
It is worth noting that the proposed explicit control law~\eqref{eq:impl3:explicit} coincides with the discrete-time controller proposed by~\cite{andritsch2024modified}. However, the presented derivation provides an intuitive motivation that allows for a complete stability proof.
\end{rem}
\begin{rem}
    It is remarkable that for $k_1 = 2 \sqrt{k_2}$, the proposed \emph{implicit} discretization of the super-twisting controller has the particularly simple form
    \begin{equation}
            u_k = - k_1 \spowf{x_k}{1}{2} + v_k, \quad
            v_{k+1} = v_k - T k_2 \sign(x_k)
\end{equation}
    for $|x_k| > k_2 T^2$, which corresponds to the super-twisting controller with \emph{explicit} Euler discretization.
    Also, it becomes a second-order dead-beat controller for ${|x_k| \le k_2 T^2}$ regardless of $k_1$.
\end{rem}
\begin{thm}
    \label{th:stability}
    Let $L \in \RR_{\ge 0}$, $T \in \RR_{> 0}$ and consider the interconnection of the control law \eqref{eq:impl3:explicit}, \changed{the zero-order hold \eqref{eq:zoh}, and the continuous-time plant \eqref{eq:plant}.}
    Suppose that the disturbance \changed{$w : \RR_{\ge 0} \to \RR$ is Lipschitz continuous, fulfilling $|\dot w(t)| \le L$ almost everywhere,} and that the controller parameters $k_1, k_2 \in \RR_{> 0}$ satisfy 
    \begin{equation}
        k_1 > \sqrt{k_2 + L}, \qquad
        k_2 > L.
    \end{equation}
    \changed{Define $x_k = x(kT)$ and $w_k$ as in \eqref{eq:wk}.
    Then, an integer $K$ exists such that $v_k = - w_{k-2}$, $x_k = T(w_{k-1} - w_{k-2})$ hold for all $k\ge K$, and $|x(t)| \le LT^2$ holds for all $t \ge KT$.}
\end{thm}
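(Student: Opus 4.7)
The plan would be to transform the closed loop to the modified coordinates $(z_k, \sigma_k)$, where $z_k$ is as in \eqref{eq:zk} and $\sigma_k := v_k + w_{k-2}$, so that the target conditions $v_k = -w_{k-2}$ and $x_k = T(w_{k-1}-w_{k-2})$ are precisely $(z_k, \sigma_k) = (0,0)$. Substituting the control law \eqref{eq:impl3} into the identity \eqref{eq:zkp1:2} and absorbing the disturbance increment $\delta_k := w_{k-1} - w_{k-2}$, which satisfies $|\delta_k| \le LT$, yields the implicit super-twisting-like iteration
\begin{align*}
    z_{k+1} &= z_k - T k_1 \spowf{z_{k+1}}{1}{2} + T \sigma_{k+1}, \\
    \sigma_{k+1} &\in \sigma_k - T k_2 \spow{z_{k+1}}{0} + \delta_k,
\end{align*}
in the new coordinates, which makes finite-time convergence of $(z,\sigma)$ to the origin the central analytical goal.

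The main obstacle is establishing this finite-time convergence under the assumed parameter bounds. I would exploit the explicit form of Theorem~\ref{th:explicit} to split the analysis into a reaching phase $|x_k|/T^2 > k_2$ and a dead-beat phase $|x_k|/T^2 \le k_2$. A direct substitution into the dead-beat branch of \eqref{eq:impl3:explicit} gives $z_{k+1} = 0$ and $\sigma_{k+1} = -z_k/T$, so that two consecutive dead-beat steps drive $(z,\sigma)$ exactly to $(0,0)$. On the reaching branch, the update $v_{k+1} = v_k - T k_2 \sign(x_k)$ together with $k_2 > L$ should give a strict per-step decrease $|\sigma_{k+1}| \le |\sigma_k| - (k_2-L)T$, while $k_1 > \sqrt{k_2+L}$ should let the square-root correction contract $|z|$ at a matching rate. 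A Lyapunov function of the form $V_k = a|z_k| + b\, \sigma_k z_k + c\, \sigma_k^2$, patterned after continuous-time super-twisting Lyapunov constructions, is the natural candidate to combine these estimates into a strict decrease and hence to guarantee entry into the dead-beat regime in finitely many steps.

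Once $(z_k, \sigma_k) = (0,0)$ holds for all $k \ge K$, the sampled part of the claim follows directly; for the continuous-time bound I would write $W(t) := \int_0^t w(\tau)\,\diffd\tau$ and, using $u_k = -2 w_{k-1} + w_{k-2}$, expand
\[
    x(kT+s) = W(kT+s) - \tfrac{2s}{T} W(kT) + \bigl(\tfrac{3s}{T}-2\bigr) W((k-1)T) + \bigl(1-\tfrac{s}{T}\bigr) W((k-2)T)
\]
for $s \in [0,T]$. The four coefficients sum to zero and have vanishing first moment in the associated time points, so integrating by parts twice realises $x(kT+s)$ as $\int K(\eta,s)\,\dot w(\eta)\,\diffd\eta$ against the Peano kernel $K(\eta,s) = \sum_i c_i (\tau_i - \eta)_+$. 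A short case analysis on the three subintervals of $[(k-2)T,\, kT+s]$ shows that $K \ge 0$ and that $\int K(\eta,s)\,\diffd\eta = (2T^2 - Ts + s^2)/2$, which is at most $T^2$ on $s \in [0,T]$. Combined with $|\dot w| \le L$ almost everywhere, this yields $|x(t)| \le LT^2$ for all $t \ge KT$, completing the proof.
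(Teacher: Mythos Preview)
Your coordinate change to $(z_k,\sigma_k)$ and the resulting implicit iteration are correct and match the paper's variables $(z_k,q_k)$. The genuine gap is in the reaching-phase analysis. The claimed inequality $|\sigma_{k+1}|\le|\sigma_k|-(k_2-L)T$ is false in general: from $\sigma_{k+1}=\sigma_k-Tk_2\sign(x_k)+\delta_k$ and $x_k=z_k+T\sigma_k+T(w_{k-1}-w_{k-2})$ one sees that $\sign(x_k)$ need not equal $\sign(\sigma_k)$ (e.g.\ $z_k$ large positive, $\sigma_k$ moderately negative), so $|\sigma_k|$ can \emph{increase} during reaching. The proposed candidate $a|z|+b\,\sigma z+c\,\sigma^2$ is also not weighted-homogeneous for the super-twisting scaling; the continuous-time quadratic constructions you allude to use $\sigma\spowf{z}{1}{2}$ as cross term, not $\sigma z$. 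More fundamentally, even with a corrected quadratic form, obtaining a strict per-step decrease \emph{directly} for the discrete iteration under the minimal gain $k_1>\sqrt{k_2+L}$ is the hard part. The paper avoids this entirely: it observes that \eqref{eq:closedloop:discrete} is the implicit Euler step of the continuous-time closed loop \eqref{eq:closedloop:continuous} and proves (Lemma~\ref{lem:lyap:discrete}) that any \emph{quasiconvex} strict Lyapunov function for the continuous inclusion decreases along the implicit iteration. The quasiconvex function $V_\alpha$ from Lemma~\ref{lem:lyap} then gives attractivity of arbitrarily small sublevel sets. This convexity-based transfer from continuous to implicit-discrete time is the key technical idea you are missing.

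A secondary gap concerns the dead-beat phase: one dead-beat step indeed yields $z_{k+1}=0$, $\sigma_{k+1}=-z_k/T$, but a single entry into $|x_k|\le k_2T^2$ does not guarantee that the next step is again dead-beat. The paper handles this by proving forward invariance of the set $\Omega=\{\max\{|z|,|z+Tq|\}\le(k_2-L)T^2\}$ (Lemma~\ref{lem:invariance}\,(\ref{it:Omega})); inside $\Omega$ one always has $z_{k+1}=0$, whence $z_{k+2}=q_{k+2}=0$ follows after two steps.

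Your Peano-kernel argument for the continuous-time bound $|x(t)|\le LT^2$ is correct and is a genuinely different route from the paper. The paper argues by contradiction: if $x(\tau)>LT^2$ with $\dot x(\tau)>0$ at some $\tau\in(kT,(k{+}1)T)$, it modifies $w$ to be constant on $[\tau,(k{+}1)T]$, forcing the contradiction $x_{k+1}>LT^2$. Your integral representation is more constructive, exhibits the sharp constant explicitly via $\int K(\eta,s)\,\diffd\eta=(2T^2-Ts+s^2)/2\le T^2$, and verifies $K\ge 0$ on each subinterval; the paper's argument is shorter but less quantitative.
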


\section{Implicit Conditioned Super-Twisting Control with Actuator Saturation}
\label{sec:conditioned}

In order to obtain an implicit discretization of the conditioned super-twisting controller, note that, in continuous time, its construction \changedc{in \cite{seeber2020conditioned}} is based on the fact that, in \eqref{eq:sta:conditioned:v},
\begin{equation}
    \label{eq:signequivalence}
    \sign(v-u) = \sign(k_1 \spowf{x}{1}{2}) = \sign(x)
\end{equation}
holds for $k_1 > 0$ in the unsaturated case $u = \bar u$.
Applying a similar modification to \eqref{eq:impl3} yields the proposed implicit conditioned super-twisting controller as
\begin{subequations}
    \label{eq:implc}
    \begin{align}
\label{eq:implc:u}
        \bar u_{k} &= -k_1 \spowf{z_{k+1}}{1}{2} + 2 v_{k+1} - v_k \\
        \label{eq:implc:usat}
        u_k &= \sat_{\umax} (\bar u_k) \\
    \label{eq:implc:v}
    v_{k+1} &\in v_k - T k_2  \spow{2 v_{k+1} - v_k - u_k}{0}.
    \end{align}
\end{subequations}
The next two theorems show how to implement this control law in explicit form and give closed-loop stability conditions.
Their proofs are given in Sections~\ref{sec:explicit:saturated} and~\ref{sec:stability:saturated}.
\begin{thm}
    \label{th:explicit:conditioned}
    Let $k_1, k_2, T, U \in \RR_{> 0}$ and define the abbreviation $\lambda = k_2 - \frac{k_1^2}{4}$.
    Then, an explicit form of the implicit conditioned super-twisting controller \eqref{eq:implc}, i.e., a solution $u_k, v_{k+1}$ of the system of generalized equations \eqref{eq:zkp1:1} and \eqref{eq:implc}, is given by
    \begin{subequations}
        \label{eq:implc:explicit}
        \begin{align}
            \label{eq:implc:explicit:u}
            &\hat u_k = \begin{cases}
v_k -  \left(2 \lambda T + k_1 \sqrt{|x_k| - \lambda T^2} \right ) \sign(x_k) & \frac{|x_k|}{T^2} > k_2 \\
                v_k - \frac{2 x_k}{T} & \frac{|x_k|}{T^2} \le k_2
            \end{cases} \\
            \label{eq:implc:explicit:usat}
            &u_k = \sat_{\umax} (\hat u_k) \\
            \label{eq:implc:explicit:v}
            &v_{k+1} = \begin{cases}
                v_{k} - T k_2 \sign(v_k - u_k) & |v_k - u_k| > 2 k_2 T \\
                \frac{v_k + u_k}{2} & |v_k - u_k| \le 2 k_2 T
            \end{cases}
\end{align}
    \end{subequations}
    for every given $x_k, v_k \in \RR$.
\end{thm}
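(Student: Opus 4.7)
The plan is to verify directly that the triple $(\hat u_k, u_k, v_{k+1})$ defined in \eqref{eq:implc:explicit} solves the generalized equations \eqref{eq:zkp1:1} and \eqref{eq:implc} for every $x_k, v_k \in \RR$. I would first dispatch the set-valued inclusion \eqref{eq:implc:v}. Substituting $v_{k+1}$ from \eqref{eq:implc:explicit:v} into $2v_{k+1} - v_k - u_k$ yields, in the regime $|v_k - u_k| > 2k_2 T$, a quantity with $\sign(2v_{k+1} - v_k - u_k) = \sign(v_k - u_k)$, matching the stated update $v_{k+1} = v_k - Tk_2\sign(v_k - u_k)$; in the regime $|v_k - u_k| \le 2k_2 T$ it vanishes identically, so $\spow{2v_{k+1} - v_k - u_k}{0} = [-1,1]$ and the inclusion reduces to $|v_{k+1} - v_k| \le Tk_2$, which is immediate from $v_{k+1} = (v_k + u_k)/2$.

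Next, I would verify \eqref{eq:zkp1:1}, \eqref{eq:implc:u}, and \eqref{eq:implc:usat} by distinguishing whether \eqref{eq:implc:usat} saturates. In the unsaturated case $|\hat u_k| \le U$, one has $u_k = \hat u_k$, so $v_k - u_k$ takes the explicit form $\Delta\sign(x_k)$ or $2x_k/T$ according to the two branches of \eqref{eq:implc:explicit:u}, where $\Delta := 2\lambda T + k_1\sqrt{|x_k| - \lambda T^2}$. Using $|x_k| - \lambda T^2 > k_1^2 T^2/4$ for $|x_k| > k_2 T^2$ gives $\Delta > 2k_2 T$, from which the regime selector $|v_k - u_k|$ vs.\ $2k_2 T$ in \eqref{eq:implc:explicit:v} exactly matches the selector $|x_k|/T^2$ vs.\ $k_2$ in \eqref{eq:impl3:explicit:v}, and $\sign(v_k - u_k) = \sign(x_k)$. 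Hence the candidate coincides with the solution of Theorem~\ref{th:explicit}, giving \eqref{eq:zkp1:1} and \eqref{eq:impl3:u}; since $\bar u_k = u_k$ with $|\bar u_k| \le U$, both \eqref{eq:implc:u} and \eqref{eq:implc:usat} follow.

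The remaining and most technical case is the saturated one, $|\hat u_k| > U$, where $u_k = U\sign(\hat u_k)$. Setting $\eta := 2v_{k+1} - v_k - u_k$, which is explicit from \eqref{eq:implc:explicit:v}, gives $z_{k+1} = x_k + T(u_k - v_k - \eta)/2$ via \eqref{eq:zkp1:1}, and \eqref{eq:implc:u} becomes $\bar u_k = u_k + \eta - k_1\spowf{z_{k+1}}{1}{2}$. The task then reduces to showing $|\bar u_k| \ge U$ with $\sign(\bar u_k) = \sign(\hat u_k)$, so that $\sat_{\umax}(\bar u_k) = u_k$. This is the main obstacle: subdividing by whether $|x_k|/T^2$ exceeds $k_2$ and by which regime of \eqref{eq:implc:explicit:v} is in effect produces several subcases, but each is handled by sign-and-magnitude bookkeeping driven by $|\hat u_k| > U$ and $\Delta > 2k_2 T$, showing that the correction $-k_1\spowf{z_{k+1}}{1}{2}$ that separates $\bar u_k$ from $\hat u_k$ only reinforces the saturation rather than undoing it.
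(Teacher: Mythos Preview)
Your verification of \eqref{eq:implc:v} and of the unsaturated case $|\hat u_k|\le U$ is correct and matches the paper's argument; the observation $\Delta>2k_2T$ is exactly what aligns the two regime selectors there.

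For the saturated case your outline diverges from the paper and contains a gap. The paper does not split on $|x_k|/T^2$ and the branch of \eqref{eq:implc:explicit:v}; it instead factors the work into Lemma~\ref{lem:implc:equivalence}, comparing $(z_{k+1},v_{k+1},\bar u_k)$ with the unsaturated solution $(\hat z_{k+1},\hat v_{k+1},\hat u_k)$ and distinguishing only $\hat v_{k+1}\le v_{k+1}$ versus $\hat v_{k+1}>v_{k+1}$, each dispatched by a short monotonicity argument. Your four-way split can in fact be carried through, but the claim that $-k_1\spowf{z_{k+1}}{1}{2}$ ``only reinforces the saturation'' is wrong in at least one subcase: for $\hat u_k>U$ with $x_k>k_2T^2$ one necessarily has $v_k-u_k>\Delta>2k_2T$, hence $\eta=v_k-U-2k_2T>0$ and $z_{k+1}=x_k-T\eta-k_2T^2$, which may well be \emph{positive}; then $-k_1\spowf{z_{k+1}}{1}{2}<0$ works against the saturation, not for it. What actually secures $\bar u_k=U+\eta-k_1\sqrt{z_{k+1}}>U$ there is the quantitative bound $\eta^2>k_1^2 z_{k+1}$, obtained by rewriting $\hat u_k>U$ as $\eta+\tfrac{k_1^2T}{2}>k_1\sqrt{x_k-\lambda T^2}$, squaring, and using $\lambda+\tfrac{k_1^2}{4}=k_2$. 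So this subcase needs a genuine algebraic step, not sign bookkeeping; the paper's comparison argument sidesteps it entirely.
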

\changed{\begin{rem}
    \mbox{Formally setting $U = \infty$ in \eqref{eq:implc:explicit} yields \eqref{eq:impl3:explicit}.}
\end{rem}}
\begin{rem}
    Note that the auxiliary variable $\hat u_k$ in the explicit form \eqref{eq:implc:explicit} is not necessarily equal to the auxiliary variable $\bar u_k$ in the implicit form \eqref{eq:implc}  \changedb{when $\abs{u_k} = U$}.
\end{rem}
\begin{thm}
    \label{th:stability:conditioned}
    Let $L, \wmax \in \RR_{\ge 0}$, $T \in \RR_{> 0}$.
    Consider the interconnection of the control law \eqref{eq:implc:explicit}, \changed{the zero-order hold \eqref{eq:zoh}, and the continuous-time plant \eqref{eq:plant} with a bounded, Lipschitz continuous disturbance $w : \RR_{\ge 0} \to \RR$ satisfying $|\dot w(t)| \le L$ and $|w(t)| \le \wmax$ almost everywhere.}
    Suppose that the control input bound \changed{$U \in \RR_{> 0}$ and the parameters $k_1, k_2 \in \RR_{> 0}$ satisfy} $\umax > \wmax + k_2 T$ and
    \begin{equation}
        k_1 > \sqrt{2 k_2 \frac{\umax + \wmax}{\umax - \wmax - k_2 T}}, \qquad
        k_2 > \wpmax.
    \end{equation}
    \changed{Define $x_k = x(kT)$ and $w_k$ as in \eqref{eq:wk}.
    Then, an integer $K$ exists such that $v_k = - w_{k-2}$, $x_k = T(w_{k-1} - w_{k-2})$ hold for all $k\ge K$, and $|x(t)| \le LT^2$ holds for all $t \ge KT$.}
\end{thm}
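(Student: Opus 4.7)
The plan is to mirror the proof strategy of Theorem~\ref{th:stability} from Section~\ref{sec:stability:unsaturated} while adding a preliminary phase that handles saturation. First I would introduce the natural error coordinate $e_k = v_k + w_{k-2}$, so that the claim reduces to showing $z_k \to 0$ and $e_k \to 0$ in finite time, with $z_k$ as in \eqref{eq:zk}; once this is established, $x_k = T(w_{k-1} - w_{k-2})$ follows directly from the definition of $z_k$, and the continuous-time bound $|x(t)| \le LT^2$ for $t \ge KT$ is obtained by integrating \eqref{eq:plant} over a sampling interval and invoking $|u_k + w_k| = |w_{k-1} - w_{k-2}| + O(LT)$ together with $|\dot w| \le L$.

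Next I would partition the dynamics induced by \eqref{eq:implc:explicit} into three regimes: (i) the unsaturated regime $|\hat u_k| \le U$, in which \eqref{eq:implc:explicit} collapses to \eqref{eq:impl3:explicit} and Theorem~\ref{th:stability}'s Lyapunov analysis applies verbatim under the (strictly stronger) hypothesis $k_1 > \sqrt{2 k_2 (U+W)/(U-W-k_2 T)}$; (ii) the saturated regime with $|v_k - u_k| > 2 k_2 T$, where the conditioning update $v_{k+1} = v_k - k_2 T \sign(v_k - u_k)$ is active; and (iii) the saturated regime with $|v_k - u_k| \le 2 k_2 T$, where $v_{k+1} = (v_k + u_k)/2$. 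In regime (ii) I would use the bound $|w_{k-1}| \le W$ and the hypothesis $U > W + k_2 T$ to argue the discrete analogue of the sign equivalence \eqref{eq:signequivalence}: $u_k = U \sign(\hat u_k)$ forces $\sign(v_k - u_k) = -\sign(e_k)$ whenever $|e_k|$ is large enough, so that $|e_k|$ decreases by at least $(k_2 - L)T$ per step. In regime (iii), a direct computation shows $e_{k+1}$ contracts $e_k$ by a factor of $1/2$ up to a $LT$-sized disturbance term, and inspection of the definition of $\hat u_k$ shows that this regime can be entered only when $|z_k|$ is already comparable to the terminal bound.

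The finite-time-dwell property of regimes (ii) and (iii) then guarantees that saturation ceases after finitely many steps, after which the unsaturated analysis of Section~\ref{sec:stability:unsaturated} drives $(z_k, e_k)$ to zero in finite time. The main obstacle, and what the stability hypothesis on $k_1$ is really doing, is to rule out an indefinite ping-pong between the saturated and unsaturated regimes: the trajectory could in principle leave saturation with $(z_k, e_k)$ large enough to re-enter it at the next step, creating a cycle along which naive monotonicity fails. Overcoming this requires constructing a joint quasi-Lyapunov function in $(z_k, e_k)$ whose sublevel sets are forward-invariant across all three regimes, and verifying that the quantitative lower bound on $k_1$ matches precisely the threshold that makes the unsaturated decrease dominate the worst-case $e_k$-increment caused by switching into and out of saturation. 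This matches the spirit of the continuous-time stability condition obtained in~\citep{seeber2020conditioned}, which will serve as the guiding template for selecting the Lyapunov function.
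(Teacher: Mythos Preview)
Your high-level reduction---show that saturation eventually becomes permanently inactive and then invoke Theorem~\ref{th:stability}---is exactly what the paper does. But your proposed mechanism for the saturated phase does not work, and the paper's actual mechanism is quite different.

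The claim for regime~(ii) is where the argument breaks. You assert that during saturation with $|v_k-u_k|>2k_2T$ the update $v_{k+1}=v_k-k_2T\sign(v_k-u_k)$ forces $\sign(v_k-u_k)=-\sign(e_k)$ and hence $|e_k|$ decreases. This is false: take $\hat u_k>U$, so $u_k=U$, and $v_k<U-2k_2T$ (which is entirely possible, e.g.\ when $x_k$ is large and positive while $v_k$ is moderate). Then $v_k-u_k<-2k_2T$, so $v_{k+1}=v_k+k_2T$, and $e_{k+1}=e_k+k_2T+(w_{k-1}-w_{k-2})$ \emph{increases}. Conditioning does not drive $e_k$ toward zero during saturation; it merely prevents $|v_k|$ from growing past $U$. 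Your regime~(iii) analysis has the same issue: $v_{k+1}=(v_k+u_k)/2$ contracts $v_k$ toward $u_k=\pm U$, not $e_k$ toward zero. Because of this, no monotone functional of $e_k$ alone will work, and the ``joint quasi-Lyapunov function in $(z_k,e_k)$'' you gesture at cannot be the one from \citep{seeber2020conditioned}, whose level sets are not invariant across the saturation switch.

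The paper avoids this entirely by proving finite-time attractivity of a nested chain $\mathcal{M}_1\supset\mathcal{M}_2\supset\mathcal{M}_3$ in the coordinates $(z_k,v_k,\bar u_k)$, not $(z_k,e_k)$: first $|v_k|\le U$ becomes invariant (this is the only thing conditioning buys), then $|z_k|\le(U+W+\delta)^2/k_1^2$, and finally $|\bar u_k|\le U$. The ping-pong obstacle is resolved not by a Lyapunov function but by a \emph{backward} argument on $\bar u_k$: once inside $\mathcal{M}_2$, whenever $\bar u_k>U$ one shows $\bar u_{k-1}\ge\bar u_k+T\epsilon$ with $\epsilon=\tfrac{k_1^2}{2}\tfrac{U-W-k_2T}{U+W+\delta}-k_2$, which is where the stated lower bound on $k_1$ enters. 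This bounds the number of saturated steps a priori. Your intersample argument for $|x(t)|\le LT^2$ is also too loose; the paper uses a contradiction via a modified disturbance, as in the proof of Theorem~\ref{th:stability}.
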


\changedc{It is perhaps interesting to compare the implicit conditioned super-twisting controller to the implicit first-order sliding mode controller $u_k \in -c \spow{x_k + T u_k}{0}$ studied in \cite{acary2012chattering}, which may equivalently be written in explicit form using a saturation function\footnote{\changedc{It is worth noting that \eqref{eq:impl3:explicit} and \eqref{eq:implc:explicit} also exhibit some kind of saturation-like form; in particular, \eqref{eq:impl3:explicit:v} and \eqref{eq:implc:explicit:v} may equivalently be written as $v_{k+1} = v_k - \sat_{k_2 T}(\frac{x_k}{T})$ and $v_{k+1} = v_k - \sat_{k_2 T}(\frac{v_k - u_k}{2})$, respectively.}} as $u_k = -\sat_c(\frac{x_k}{T})$, cf. \cite{brogliato2023comments}.
    Provided that the parameters of the former fulfill Theorem~\ref{th:stability:conditioned} and the parameter of the latter satisfies $c \in (W,U]$, both controllers comply with the control input bound $|u_k| \le U$ and, in the disturbance free case $w(t) \equiv 0$, also achieve finite-time stabilization of the origin without chattering.
    They differ in terms of disturbance rejection, however: the conditioned super-twisting controller has $|x_k| \le L T^2$ according to Theorem~\ref{th:stability:conditioned}, while the first-order sliding mode controller achieves $|x_k| \le W T$ according to \cite[Proposition~1]{acary2012chattering}.
Thus, the former has the capacity for significantly better accuracy if the largest change of the disturbance during a sampling step, i.e. $LT$, is significantly smaller than\footnote{\changedc{Note that, even in case $LT > W$, the accuracy of the implicit conditioned super-twisting controller is never worse than twice the first-order sliding mode band, because Theorem~\ref{th:stability:conditioned} additionally guarantees $|x_k| \le 2W T$ due to the trivial bound $|w_{k-1} - w_{k-2}| \le 2W$.}} the disturbance amplitude $W$.
Indeed, the implicit conditioned super-twisting---being an integral-type controller---can fully reject nonzero disturbances that are constant (i.e., $L = 0$), while the implicit first-order sliding mode controller exhibits a residual control error in such cases.}

\section{Derivation of Explicit Control Laws}
\label{sec:explicit}

This section formally derives the explicit forms of the control laws in Sections~\ref{sec:implicit} and~\ref{sec:conditioned}.

\subsection{Unsaturated Control Input}
\label{sec:explicit:unsaturated}

The explicit control law \eqref{eq:impl3:explicit} of the proposed implicit super-twisting controller is obtained in a similar fashion as in \citep{brogliato2020implicit}, using the following auxiliary lemma.
Its proof is given in \changed{Appendix~\ref{proof-lem:zkp1:explicit}}.
\begin{lem}
    \label{lem:zkp1:explicit}
    Let $k_1, k_2, T \in \RR_{> 0}$, define $\lambda = k_2 - \frac{k_1^2}{4}$, and let $x_k, v_k \in \RR$.
    Then, the unique solution $z_{k+1}$ of \eqref{eq:zkp1:1}, \eqref{eq:impl3} is given by
    \begin{equation}
        \label{eq:impl3:explicit:z}
        z_{k+1} = \begin{cases}
            \left(\sqrt{|x_k| - \lambda T^2} - \frac{T k_1}{2}\right)^2 \sign(x_k) & |x_k| > k_2 T^2 \\
            0 & |x_k| \le k_2 T^2.
        \end{cases}
    \end{equation}
\end{lem}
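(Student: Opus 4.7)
The plan is to eliminate $u_k$ and $v_{k+1}$ to collapse the three relations \eqref{eq:zkp1:1} and \eqref{eq:impl3} into a single generalized equation in the unknown $z_{k+1}$, and then solve that inclusion by a case split on whether $z_{k+1}$ is zero.

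First I would substitute \eqref{eq:impl3:u} into \eqref{eq:zkp1:1} to obtain
\begin{equation*}
    z_{k+1} = x_k - T k_1 \spowf{z_{k+1}}{1}{2} + T(v_{k+1} - v_k),
\end{equation*}
and then use \eqref{eq:impl3:v}, i.e., $T(v_{k+1}-v_k) \in -T^2 k_2 \spow{z_{k+1}}{0}$, to arrive at the scalar generalized equation
\begin{equation*}
    x_k \in z_{k+1} + T k_1 \spowf{z_{k+1}}{1}{2} + T^2 k_2 \spow{z_{k+1}}{0}.
\end{equation*}
Note that the $v_k$-dependence has disappeared, which is consistent with the claim that $z_{k+1}$ depends only on $x_k$.

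Next I would dispose of the case $z_{k+1} = 0$: the right-hand side reduces to $T^2 k_2 [-1,1]$, so $z_{k+1} = 0$ is a solution if and only if $|x_k| \le k_2 T^2$. For $z_{k+1} \ne 0$, the set-valued term collapses to $\sign(z_{k+1})$, and writing $s = \sign(z_{k+1})$ and $y = \sqrt{|z_{k+1}|} > 0$ turns the equation into $s(y^2 + T k_1 y + T^2 k_2) = x_k$. Since the factor in parentheses is strictly positive, matching signs forces $s = \sign(x_k)$ and $y$ to satisfy the quadratic $y^2 + T k_1 y + T^2 k_2 - |x_k| = 0$, whose unique nonnegative root (when real) is $y = \sqrt{|x_k| - \lambda T^2} - T k_1/2$, using $\lambda = k_2 - k_1^2/4$. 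Squaring $y$ and multiplying by $s$ yields the formula \eqref{eq:impl3:explicit:z}.

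The main technical step, and the only one requiring a little care, is the threshold analysis: I need to check that the expression $y = \sqrt{|x_k| - \lambda T^2} - T k_1/2$ is real and strictly positive exactly when $|x_k| > k_2 T^2$, and that the other quadratic root $-T k_1/2 - \sqrt{|x_k|-\lambda T^2}$ can never be nonnegative. Both facts follow at once from the identity $k_2 T^2 - \lambda T^2 = T^2 k_1^2/4$, which gives $\sqrt{k_2 T^2 - \lambda T^2} = T k_1/2$, so that $y > 0 \iff |x_k| > k_2 T^2$, while the other root is manifestly negative. This simultaneously establishes uniqueness, because the two subcases $|x_k| \le k_2 T^2$ and $|x_k| > k_2 T^2$ partition $\RR$ and each admits exactly one value of $z_{k+1}$ satisfying the generalized equation.
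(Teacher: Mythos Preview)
Your proposal is correct and follows essentially the same route as the paper's own proof: substitute \eqref{eq:impl3} into \eqref{eq:zkp1:1} to obtain the single generalized equation $x_k \in z_{k+1} + T k_1 \spowf{z_{k+1}}{1}{2} + T^2 k_2 \spow{z_{k+1}}{0}$, handle the case $z_{k+1}=0$ via the interval $T^2 k_2[-1,1]$, and in the nonzero case match signs and solve the resulting quadratic in $\sqrt{|z_{k+1}|}$. Your treatment of the threshold and of the spurious quadratic root is in fact a bit more explicit than the paper's, but the argument is the same.
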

\changed{\begin{rem}
        Alternatively, the framework of monotone operators and their resolvents~\citep[cf.][Chapter 23]{bauschke2011} could be used to solve for $z_{k+1}$ in \eqref{eq:zkp1:1}, \eqref{eq:impl3}.
        In this case, the resolvent has the same structure as obtained in~\citep{mojallizadeh} for the implicit super-twisting differentiator.
\end{rem}}\begin{proof}[Proof of Theorem~\ref{th:explicit}]
    Using the unique solution $z_{k+1}$ from Lemma~\ref{lem:zkp1:explicit}, distinguish the two cases in \eqref{eq:impl3:explicit:z}, \changedc{i.e., $|x_k| > k_2 T^2$ and $|x_k| \le k_2T^2$}.
    In the first case, $\sign(z_{k+1}) = \sign(x_k)$ and \eqref{eq:impl3:v} yield \eqref{eq:impl3:explicit:v}, and from \eqref{eq:impl3:u} one obtains
\begin{align}
    u_k - v_k&= -k_1 \spowf{z_{k+1}}{1}{2}  + 2 v_{k+1} - 2 v_k \nonumber \\
    &= \left( \frac{T k_1^2}{2} -k_1 \sqrt{|x_k| - \lambda T^2} - 2 T k_2\right) \sign(x_k) \nonumber \\
    &= - \left( 2\lambda T + k_1 \sqrt{|x_k|-\lambda T^2} \right) \sign(x_k).
\end{align}
In the second case, $0 = z_{k+1} = x_k + T(v_{k+1} - v_k)$ yields \eqref{eq:impl3:explicit:v}, and \eqref{eq:impl3:u} yields $u_k = 2 v_{k+1} - v_k = v_{k} - \frac{2x_k}{T}$.
\end{proof}

\subsection{Saturated Control Input}
\label{sec:explicit:saturated}

Obtaining the explicit form of the implicit conditioned super-twisting controller \eqref{eq:implc} requires solving the system of generalized equations \eqref{eq:zkp1:1}, \eqref{eq:implc} containing the nonlinear saturation function.
The following lemma reduces this problem to the solution of the unsaturated equations \eqref{eq:zkp1:1}, \eqref{eq:impl3} \changed{with variables $z_{k+1}, v_{k+1}, u_k$ renamed to $\hat z_{k+1}, \hat v_{k+1}, \hat u_{k}$}.
The proof is in \changed{Appendix~\ref{proof-lem:implc:equivalence}}.

\begin{lem}
    \label{lem:implc:equivalence}
    Let $k_1, k_2, T, U \in \RR_{> 0}$ and $x_k, v_k \in \RR$.
    Consider the unique solution $\hat u_k$ of the system of generalized equations
    \begin{subequations}
        \label{eq:impl:aux}
        \begin{align}
            \label{eq:impl:aux:z}
            \hat z_{k+1} &= x_k + T (\hat u_k - \hat v_{k+1}) \\
            \label{eq:impl:aux:u}
            \hat u_k &= -k_1 \spowf{\hat z_{k+1}}{1}{2} + 2 \hat v_{k+1} - v_k \\
            \label{eq:impl:aux:v}
\changed{\hat v_{k+1}} &\changed{\in v_k - T k_2 \spow{\hat z_{k+1}}{0}.}
        \end{align}
    \end{subequations}
    Then, $u_k = \sat_U(\hat u_k)$ is a solution of the generalized system of equations \eqref{eq:zkp1:1}, \eqref{eq:implc}.
\end{lem}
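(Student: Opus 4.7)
My plan is to reformulate the generalized system \eqref{eq:zkp1:1}, \eqref{eq:implc} as a scalar fixed-point problem in $u_k$ and to dispose of the saturation by a monotonicity argument. First, I would solve the generalized equation \eqref{eq:implc:v} for $v_{k+1}$ as a function of $u_k$ (and $v_k$) alone: substituting $s := 2 v_{k+1} - v_k - u_k$ reduces \eqref{eq:implc:v} to $s \in v_k - u_k - 2 T k_2 \spow{s}{0}$, whose unique solution, found by splitting into $s>0$, $s=0$, $s<0$, yields the piecewise formula \eqref{eq:implc:explicit:v}. Denoting this solution by $v_{k+1}(u)$, set $z_{k+1}(u) := x_k + T(u - v_{k+1}(u))$ and $\bar u_k(u) := -k_1 \spowf{z_{k+1}(u)}{1}{2} + 2 v_{k+1}(u) - v_k$. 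The task then reduces to establishing $\sat_U(\bar u_k(u_k)) = u_k$ for $u_k := \sat_U(\hat u_k)$. Observing that \eqref{eq:impl:aux} mirrors \eqref{eq:zkp1:1}, \eqref{eq:impl3}, and that the set-valued signs of $\hat z_{k+1}$ and $k_1 \spowf{\hat z_{k+1}}{1}{2}$ coincide for $k_1 > 0$, one sees that $\hat u_k$ is precisely the unique fixed point $\bar u_k(\hat u_k) = \hat u_k$.

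If $|\hat u_k| \le U$, then $u_k = \hat u_k$, and I would simply take $v_{k+1} = \hat v_{k+1}$, $z_{k+1} = \hat z_{k+1}$, $\bar u_k = \hat u_k$: all equations then follow directly from \eqref{eq:impl:aux}. The interesting case is $|\hat u_k| > U$; by symmetry I may assume $\hat u_k > U$, so $u_k = U$, and must show $\bar u_k(U) \ge U$. The crux is the claim that the residual $g(u) := \bar u_k(u) - u$ is continuous and strictly decreasing on $\RR$. Given this, since $g(\hat u_k) = 0$ and $U < \hat u_k$, strict monotonicity gives $g(U) > 0$, hence $\bar u_k(U) > U$, and consequently $\sat_U(\bar u_k(U)) = U = u_k$, completing the verification.

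The main obstacle is proving strict monotonicity of $g$, which requires a case analysis over the three branches of $v_{k+1}(\cdot)$ combined with the kink of $\spowf{\cdot}{1}{2}$ at zero. On the outer branches ($|v_k - u| > 2 T k_2$), $v_{k+1}$ is constant and $z_{k+1}'(u) = T$, so a direct computation gives $g'(u) = -1 - \tfrac{k_1 T}{2 \sqrt{|z_{k+1}(u)|}} < 0$ wherever $z_{k+1}(u) \ne 0$. On the middle branch, $v_{k+1}(u) = (v_k + u)/2$ causes the affine terms in $\bar u_k(u) - u$ to cancel, leaving $g(u) = -k_1 \spowf{x_k + T(u - v_k)/2}{1}{2}$, which is strictly decreasing as a composition of strictly increasing maps scaled by $-k_1 < 0$. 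Continuity of $g$ at the junctions $u = v_k \pm 2 T k_2$ is then checked by substituting both branch formulas and noting they agree, which extends the strict monotonicity to all of $\RR$.
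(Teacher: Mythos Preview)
Your argument is correct and takes a genuinely different route from the paper's proof. Both proofs handle the unsaturated case $|\hat u_k| \le U$ identically, and both reduce the saturated case $\hat u_k > U$ to showing $\bar u_k > U$ after fixing $u_k = U$ and solving \eqref{eq:implc:v} for $v_{k+1}$. The divergence is in how that inequality is obtained. The paper compares the saturated solution $(z_{k+1}, v_{k+1}, \bar u_k)$ directly against the auxiliary solution $(\hat z_{k+1}, \hat v_{k+1}, \hat u_k)$ via two sub-cases, $\hat v_{k+1} \le v_{k+1}$ and $\hat v_{k+1} > v_{k+1}$, and chains inequalities through \eqref{eq:zkp1:1}, \eqref{eq:implc:u}, \eqref{eq:impl:aux} in each. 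You instead encode the whole system as a single residual $g(u) = \bar u_k(u) - u$ and prove it is continuous and strictly decreasing, so that $g(\hat u_k) = 0$ and $U < \hat u_k$ immediately give $\bar u_k(U) > U$. Your approach is more conceptual: it isolates the monotone structure behind the implicit scheme, yields uniqueness of the fixed point $\hat u_k$ as a by-product, and avoids the somewhat delicate sign-tracking of the paper's second sub-case. The paper's approach, on the other hand, stays closer to the raw generalized equations and does not require deriving the explicit branch formula \eqref{eq:implc:explicit:v} as a prerequisite. Either way, the piecewise check of $g$ you outline---constant $v_{k+1}$ on the outer branches with $z_{k+1}'(u)=T$, and the cancellation $2v_{k+1}(u)-v_k-u=0$ on the middle branch---is straightforward, and continuity at the junctions $u = v_k \pm 2Tk_2$ follows since both branch formulas for $v_{k+1}$ agree there.
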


\begin{proof}[Proof of Theorem~\ref{th:explicit:conditioned}]
\changed{Apply Theorem~\ref{th:explicit} to system \eqref{eq:impl:aux} to see from \eqref{eq:impl3:explicit:u} that \eqref{eq:implc:explicit:u} is its unique solution.}
    Lemma~\ref{lem:implc:equivalence} then yields \eqref{eq:implc:explicit:usat}.
\changed{To show, finally,} that \eqref{eq:implc:explicit:v} is the unique solution of \eqref{eq:implc:v}, define $a_{k+1} = 2 v_{k+1} - v_k - u_k$, $b_k = v_k - u_k$ and rewrite \eqref{eq:implc:v} as
    $
        a_{k+1} \in b_k - 2Tk_2 \spow{a_{k+1}}{0}.
        $
        Its unique solution is $a_{k+1} = b_k - 2 T k_2 \sign(b_k)$ for $|b_k| > 2 k_2 T$ and $a_{k+1} = 0$ otherwise, from which \eqref{eq:implc:explicit:v} follows.
\end{proof}

\section{Stability Analysis}
\label{sec:stability}

The stability analysis is performed by proving forward invariance and finite-time attractivity of certain sets according to the following definition.
\begin{defn}
    Consider trajectories of a discrete-time system, i.e., sequences $(\x_k)$ with $\x_k \in \RR^n$.
    A set $\Omega \subseteq \RR^n$ is called
    \begin{itemize}
        \item 
            \emph{forward invariant} along the trajectories, if for all trajectories $(\x_k)$, $\x_k \in \Omega$ implies $\x_{k+1} \in \Omega$ for all $k \in \NN_0$
        \item
            \emph{finite-time attractive} along the trajectories, if for each trajectory $(\x_k)$, there exists $K \in \NN_0$ depending only on $\x_0$ such that $\x_k \in \Omega$ holds for all $k \ge K$.
    \end{itemize}
\end{defn}

\subsection{Unsaturated Control Input}
\label{sec:stability:unsaturated}

Consider the closed loop formed by interconnecting the plant \eqref{eq:plant:discrete} without actuator constraints and the proposed control law \eqref{eq:impl3}.
To investigate its stability properties, consider the state variables $z_k$ and $q_k$ defined as, cf. \eqref{eq:zk},
\begin{equation}
    \label{eq:qk}
    z_k = x_k - T(w_{k-1} + v_k), \quad
    q_k = v_k + w_{k-2},
\end{equation}
with the definition $w_{-k} := w_0$ for $k \in \NN$.
According to \eqref{eq:zkp1:2} and \eqref{eq:impl3}, these are governed by
\begin{subequations}
    \label{eq:closedloop:discrete}
    \begin{align}
        z_{k+1} &= z_k - T k_1 \spowf{z_{k+1}}{1}{2} + T q_{k+1} \\
        \label{eq:closedloop:discrete:q}
        q_{k+1} &\in q_k - T k_2 \spow{z_{k+1}}{0} + T \delta_{k+1}
    \end{align}
    with the abbreviation
    \begin{equation}
        \delta_{k} = \frac{w_{k-2} - w_{k-3}}{T} \quad\text{satisfying $|\delta_k| \le L$.}
    \end{equation}
\end{subequations}
Similar to \citep{brogliato2020implicit}, the stability analysis is based on the fact that \eqref{eq:closedloop:discrete} may be interpreted as the implicit discretization of the continuous-time closed-loop system, \changed{understood in the sense of \cite{filippov1988differential},}\begin{subequations}
    \label{eq:closedloop:continuous}
    \begin{align}
        \dot z &= -k_1 \spowf{z}{1}{2} + q \\
        \dot q &= -k_2 \sign(z) + \delta, \qquad \abs{\delta} \le \wpmax
    \end{align}
\end{subequations}
obtained by applying the continuous-time super-twisting controller \eqref{eq:sta} to \eqref{eq:plant} with $z = x$ and $q = v + w$.

Stability properties of the discrete-time closed loop may hence be analyzed using a Lyapunov function \changedb{that is quasiconvex, i.e., that has} convex sublevel sets.
\changedb{The next lemma, proven in Appendix~\ref{proof-lem:lyap:discrete}, generalizes \cite[Lemma~5]{brogliato2020implicit} to quasiconvex Lyapunov functions which are only locally Lipschitz continuous and may hence be analyzed using Clarke's generalized gradient, cf. e.g. \cite[Section~5.4]{polyakov2014stability}.}
\changedb{\begin{lem}
        \label{lem:lyap:discrete}
        Let $\F : \RR^n \to 2^{\RR^n}$ be upper semicontinuous and $\F(\x)$ be nonempty and compact for all $\x \in \RR^n$.
        Let $V : \RR^n \to \RR_{\ge 0}$ be continuous, quasiconvex, positive definite, and locally Lipschitz continuous on $\RR^n \setminus \{ \bm{0} \}$.
        Denote by $\partial V : \RR^n \setminus \{\bm{0} \} \to 2^{\RR^n}$ its Clarke generalized gradient.
        Suppose that, for each $\x \in \RR^n \setminus \{ \bm{0} \}$,
        \begin{equation}
            \label{eq:lyap:condition}
            \max_{\substack{\h \in \F(\x) \\\vzeta \in \partial V(\x)}} \vzeta\TT \h < 0
        \end{equation}
        holds.
        Then, for each $T \in \RR_{> 0}$ there exists a negative definite, upper semicontinuous
function $Q_T : \RR^n \to \RR_{\ge 0}$ such that
        $
            V(\x_{k+1}) - V(\x_k) \le Q_T(\x_{k+1})
            $
        holds for all solutions of the inclusion $\x_{k+1} \in \x_k + T \F(\x_{k+1})$.
\end{lem}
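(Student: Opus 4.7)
The plan is to explicitly construct $Q_T$ as the supremum of the bilinear pairing $\vzeta\TT \h$ over the two set-valued maps, scaled by $T$, and then to verify each of the three required properties (negative definiteness, upper semicontinuity, and the claimed descent inequality) in turn.

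Concretely, I would define
\[
Q_T(\x) = T \max_{\h \in \F(\x), \, \vzeta \in \partial V(\x)} \vzeta\TT \h \qquad \text{for } \x \ne \bm{0},
\]
and set $Q_T(\bm{0}) = 0$. The maximum is attained because $\F(\x)$ is compact by hypothesis and $\partial V(\x)$ is compact and convex, this being a standard property of the Clarke generalized gradient of a locally Lipschitz function. Negative definiteness is then immediate from the hypothesis \eqref{eq:lyap:condition}. For upper semicontinuity on $\RR^n \setminus \{\bm{0}\}$ I would combine the upper semicontinuity of $\F$ (given) with that of $\partial V$ (classical for Clarke's gradient), note that the pairing $(\h,\vzeta)\mapsto \vzeta\TT\h$ is jointly continuous, and invoke the fact that the maximum of a continuous function over a product of upper-semicontinuous compact-valued maps is itself upper semicontinuous. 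Upper semicontinuity at the origin follows at once from $Q_T \le 0 = Q_T(\bm{0})$ globally.

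The heart of the argument is the descent inequality $V(\x_{k+1}) - V(\x_k) \le Q_T(\x_{k+1})$. Given an implicit step $\x_{k+1} \in \x_k + T\F(\x_{k+1})$, I would pick $\h \in \F(\x_{k+1})$ with $\x_k = \x_{k+1} - T\h$ and parametrize the segment from $\x_{k+1}$ to $\x_k$ by $\phi(s) = \x_{k+1} - sT\h$, $s \in [0,1]$. Away from the trivial case $\x_{k+1} = \bm{0}$ (handled directly using positive definiteness of $V$), the chain rule for Clarke gradients along absolutely continuous paths yields a measurable selection $\vzeta(s) \in \partial V(\phi(s))$ such that
\[
V(\x_k) - V(\x_{k+1}) = -T\int_0^1 \vzeta(s)\TT \h \, ds .
\]
For convex $V$, monotonicity of $\partial V$ makes $s \mapsto \vzeta(s)\TT \h$ non-increasing on $[0,1]$, so it is bounded above by $\vzeta(0)\TT \h \le \max_{\vzeta\in\partial V(\x_{k+1})}\vzeta\TT\h$, which gives the claim at once. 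For merely quasiconvex $V$, this monotonicity is lost, and the bound must be recovered from the convexity of the sublevel set $\{V \le V(\x_{k+1})\}$, combined with Lebourg's mean value theorem and the fact that Clarke subgradients at boundary points of a convex sublevel set lie in its (outward) normal cone.

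I expect this last step to be the main obstacle. In the convex case, the one-line subgradient inequality $V(\x_k) \ge V(\x_{k+1}) + \vzeta\TT(\x_k - \x_{k+1})$ immediately delivers $V(\x_{k+1}) - V(\x_k) \le T \vzeta\TT \h \le Q_T(\x_{k+1})$, but in the quasiconvex setting no such inequality is available, and the quantitative bound has to be extracted indirectly from the geometry of the sublevel sets together with a careful passage between the Clarke generalized directional derivative, the Dini derivatives along $\phi$, and the measurable subgradient selection used above. The local Lipschitz hypothesis is crucial here, as it is what legitimates both the chain rule and the upper-semicontinuity of $\partial V$ on $\RR^n\setminus\{\bm{0}\}$ that underpin the whole construction.
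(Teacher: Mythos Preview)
Your choice of $Q_T(\x) = T\max_{\h \in \F(\x),\, \vzeta \in \partial V(\x)} \vzeta\TT \h$ does \emph{not} satisfy the descent inequality in the quasiconvex case, and this is a genuine gap rather than a technicality to be filled in. A one-dimensional counterexample: take $V(x) = \ln(1+|x|)$, which is continuous, positive definite, globally Lipschitz, and quasiconvex, and take $\F(x) = -\spow{x}{0}$. For $x_{k+1} = x > 0$ one has $\x_k = x + T$, hence $V(\x_{k+1}) - V(\x_k) = -\ln\bigl(1 + \tfrac{T}{1+x}\bigr)$, while your $Q_T(x) = -\tfrac{T}{1+x}$. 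Since $\ln(1+s) < s$ for every $s > 0$, the claimed inequality $V(\x_{k+1}) - V(\x_k) \le Q_T(\x_{k+1})$ fails. The difficulty you anticipate in the last paragraph is therefore not merely an obstacle; for concave-shaped quasiconvex $V$ the Clarke gradient at $\x_{k+1}$ systematically \emph{underestimates} the decrease along the segment, so no argument based on sublevel-set geometry will rescue this particular $Q_T$.

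The paper sidesteps the issue by reversing the roles of the two properties: it defines $Q_T(\x) = \max_{\h \in \F(\x)} \bigl[V(\x) - V(\x - T\h)\bigr]$, so that the descent inequality holds by construction, and then puts all the work into proving negative definiteness. That step is argued by contradiction: if $V(\x) \ge V(\x - T\h)$ for some $\x \ne \bm 0$ and $\h \in \F(\x)$, then hypothesis \eqref{eq:lyap:condition} yields $\vzeta \in \partial V(\x)$ with $\vzeta\TT(-T\h) > 0$, and a quasimonotonicity characterization of quasiconvex locally Lipschitz functions due to Daniilidis and Hadjisavvas forces $V$ to be constant on the whole segment from $\x$ to $\x - T\h$; a further result from the same reference then gives $\vzeta\TT\h = 0$ for all Clarke subgradients along the open segment, and passing to the limit contradicts \eqref{eq:lyap:condition}. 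Upper semicontinuity of this $Q_T$ is handled by a direct sequential argument using upper semicontinuity of $\F$ and continuity of $V$.
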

\begin{rem}
    Condition~\eqref{eq:lyap:condition} essentially means that $\dot V$ is negative along trajectories of the system $\dot \x \in \F(\x)$, i.e., that $V$ is a strict Lyapunov function for that system.
\end{rem}}

\changedb{The} Lyapunov function from \citep{seeber2017stability} \changedb{is now used}; \changedb{it is shown to be quasiconvex} in the following lemma, which is proven in  \changed{Appendix~\ref{proof-lem:lyap}}.
\begin{lem}
    \label{lem:lyap}
    Let $L \in \RR_{\ge 0}$, $k_1, k_2 \in \RR_{> 0}$ and consider the function $V_\alpha : \RR^2 \to \RR$ defined as
    \begin{equation}
        \label{eq:V}
        V_{\alpha}(z,q) = \begin{cases}
            2 \sqrt{q^2 + 3 \alpha^2 k_1^2 z} - q &  z > 0, q < \alpha k_1 \spowf{z}{1}{2} \\
            2 \sqrt{q^2 - 3 \alpha^2 k_1^2 z} + q &  z < 0, q > \alpha k_1 \spowf{z}{1}{2} \\
            3 |q| & \text{otherwise}
        \end{cases}
    \end{equation}
    with $\alpha \in (0,1)$.
    Then, $V_\alpha$ is continuous and positive definite, \changedb{locally Lipschitz continuous except in the origin}, and its sublevel sets $\Omega_{\alpha,c} = \{ (z,q) \in \RR^2 : V_\alpha(z,q) \le c \}$ are convex for all $c \ge 0$\changedb{, i.e., it is quasiconvex}.
    Moreover, if $k_1 > \frac{1}{\alpha} \sqrt{k_2 + L}$, $k_2 > L$, then $V_\alpha$ is a strict Lyapunov function for the continuous-time closed loop \eqref{eq:closedloop:continuous}, \changedb{i.e., it satisfies \eqref{eq:lyap:condition} for the corresponding Filippov inclusion}.
\end{lem}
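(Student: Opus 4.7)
The plan is to establish the listed properties in sequence: continuity and positive definiteness, local Lipschitz continuity, convexity of sublevel sets, and finally the strict Lyapunov property. Most are routine verifications, with convexity of the sublevel sets being the main obstacle.

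First, continuity is checked by substituting the pasted curves $q = \pm\alpha k_1 \spowf{z}{1}{2}$ into the adjacent branches. On $q = \alpha k_1 \sqrt z$ with $z > 0$, branch 1 yields $2\sqrt{\alpha^2 k_1^2 z + 3\alpha^2 k_1^2 z} - \alpha k_1 \sqrt z = 3\alpha k_1 \sqrt z = 3|q|$, matching branch 3; the other curve and the $z$-axis are analogous. Positive definiteness is by inspection: in branch 1, $z > 0$ forces $2\sqrt{q^2 + 3\alpha^2 k_1^2 z} > 2|q| \ge q$, so $V_\alpha > 0$; branch 2 is symmetric; and branch 3 vanishes only at $q = 0$, which intersects the ``otherwise'' region only at the origin. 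Local Lipschitz continuity on $\RR^2 \setminus \{\bm{0}\}$ follows because the radicands in branches 1 and 2 vanish only at the origin, each branch is smooth on its open domain, and the pieces glue continuously.

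The main obstacle is convexity of $\Omega_{\alpha,c}$. My plan is to describe the sublevel set explicitly. For fixed $q$, $V_\alpha$ is strictly increasing in $z$ on branch 1, strictly decreasing on branch 2, and constant on branch 3, so the sublevel set in the $z$ direction is an interval $[z_-(q), z_+(q)]$. Squaring $V_\alpha = c$ on branch 1 yields $z_+(q) = (c^2 + 2cq - 3q^2)/(12\alpha^2 k_1^2)$, a concave function of $q$, so $\{(z,q) : z \le z_+(q)\}$ is convex. The symmetric computation on branch 2 yields $z_-(q) = (3q^2 + 2cq - c^2)/(12\alpha^2 k_1^2)$, convex in $q$, so $\{(z,q) : z \ge z_-(q)\}$ is convex. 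Requiring that these boundary points remain inside the respective branch domains reduces to $|q| \le c/3$, contributing two further half-plane constraints. The intersection of these four convex sets equals $\Omega_{\alpha,c}$, which is therefore convex.

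Finally, to show that $V_\alpha$ is a strict Lyapunov function for \eqref{eq:closedloop:continuous}, I would compute $\nabla V_\alpha$ on each open smooth branch and take its inner product with the Filippov set-valued map. On each branch this reduces to the calculation already performed in \citep{seeber2017stability}, in which the assumptions $k_1 > \alpha^{-1}\sqrt{k_2 + L}$ and $k_2 > L$ provide exactly the strict margin needed to absorb $|\delta| \le L$ and yield a strictly negative inner product. On the pasted curves and on $z = 0$, Clarke's generalized gradient is used: convex combinations of the one-sided gradients inherit strictly negative inner products with the Filippov velocities from the bounds on adjacent branches, verifying condition~\eqref{eq:lyap:condition}.
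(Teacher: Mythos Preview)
Your proof is correct and follows essentially the same route as the paper's. In particular, your explicit description of the sublevel set via the boundary curves $z_{\pm}(q)$ is equivalent to the paper's characterization $|12\alpha^2 k_1^2 z - 2cq| \le c^2 - 3q^2$, $|q|\le c/3$, and both arguments conclude convexity by exhibiting $\Omega_{\alpha,c}$ as an intersection of convex sets; for the remaining items (continuity, positive definiteness, strict Lyapunov property) both you and the paper essentially defer to \citep{seeber2017stability}, with your version spelling out slightly more detail.
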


Using this Lyapunov function $V_\alpha$ \changedb{and Lemma~\ref{lem:lyap:discrete}}, the following lemma shows forward invariance and finite-time attractivity of its sublevel sets $\Omega_{\alpha,c}$.
Moreover, the origin is shown to be finite-time attractive by virtue of another forward invariant set $\Omega$.
The proof is given in  \changed{Appendix~\ref{proof-lem:invariance}}.

\begin{lem}
    \label{lem:invariance}
    Let $k_1, k_2, T \in \RR_{> 0}$, $L \in \RR_{\ge 0}$, and let the function $V_\alpha$ be defined as in Lemma~\ref{lem:lyap} with $\alpha \in (0,1)$.
    Suppose that $k_2 > L$.
    Consider the closed loop formed by the interconnection of \eqref{eq:plant:discrete} and \eqref{eq:impl3} with $w_k$ satisfying $|w_{k} - w_{k-1}| \le LT$, and consider the trajectories $(z_k, q_k)$ of $z_k$ and $q_k$ defined in \eqref{eq:qk}.
    Then, the following sets are forward invariant and finite-time attractive along closed-loop trajectories:
    \begin{enumerate}[(a)]
        \item
            \label{it:Omegac}
            $\Omega_{\alpha,c} = \{ (z,q) \in \RR^2 : V_{\alpha}(z,q) \le c \}$ for all $c \in \RR_{> 0}$, if $k_1 > \frac{\sqrt{k_2 + L}}{\alpha}$,
        \item
            \label{it:Omega}
            $\Omega = \{ (z,q) \in \RR^2 : \max\{ |z|, |z+Tq| \} \le (k_2-L)T^2 \}$, if $k_1 > \sqrt{k_2 + L}$.
\end{enumerate}
    Moreover, $(z_k, q_k) \in \Omega$ implies $z_{k+2} = q_{k+2} = 0$ for all $k \in \NN_0$.
\end{lem}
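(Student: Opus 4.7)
The plan is to view the closed-loop dynamics \eqref{eq:closedloop:discrete} as the discrete-time inclusion $\x_{k+1} \in \x_k + T \F(\x_{k+1})$ with $\x_k = (z_k, q_k)\TT$ and $\F$ being the Filippov set-valued map of the continuous-time system \eqref{eq:closedloop:continuous}, treating $\delta_{k+1}$ as an adversarial input taking values in $[-L, L]$. Since $\F$ is upper semicontinuous with nonempty compact values and Lemma~\ref{lem:lyap} identifies $V_\alpha$ as a strict Lyapunov function of \eqref{eq:closedloop:continuous} whenever $k_1 > \sqrt{k_2+L}/\alpha$, Lemma~\ref{lem:lyap:discrete} supplies a negative definite, upper semicontinuous $Q_T$ such that $V_\alpha(\x_{k+1}) - V_\alpha(\x_k) \le Q_T(\x_{k+1}) \le 0$ along every closed-loop trajectory. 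This single Lyapunov inequality is the backbone for both parts.

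For part~(\ref{it:Omegac}), forward invariance of $\Omega_{\alpha,c}$ is immediate from $Q_T \le 0$. For finite-time attractivity I argue that any trajectory initialized at $\x_0$ stays in the compact sublevel set $\Omega_{\alpha,c_0}$ with $c_0 = V_\alpha(\x_0)$, using that $V_\alpha$ is continuous, positive definite and radially unbounded. On the compact set $\Omega_{\alpha,c_0} \setminus \mathrm{int}(\Omega_{\alpha,c})$, the upper-semicontinuous negative definite function $Q_T$ attains a strictly negative supremum $-\epsilon$; hence every step in which $\x_{k+1}$ still lies outside $\Omega_{\alpha,c}$ decreases $V_\alpha$ by at least $\epsilon$, so after at most $\lceil c_0/\epsilon \rceil$ steps the trajectory must enter and, by invariance, remain in $\Omega_{\alpha,c}$.

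For part~(\ref{it:Omega}) and the dead-beat claim I argue directly from the explicit update formula in Theorem~\ref{th:explicit}. Expressing $x_k = z_k + T q_k + T^2 \delta_{k+1}$, the hypothesis $(z_k, q_k) \in \Omega$ combined with $|\delta_{k+1}| \le L$ gives $|x_k| \le (k_2-L)T^2 + LT^2 = k_2 T^2$, placing the update in the second branch of \eqref{eq:impl3:explicit}. This forces $z_{k+1} = 0$, and a short calculation using $q_{k+1} = v_{k+1} + w_{k-1}$ and $q_k = v_k + w_{k-2}$ produces $q_{k+1} = -z_k/T$; thus $|z_{k+1}| = 0$ and $|z_{k+1} + T q_{k+1}| = |z_k| \le (k_2-L)T^2$, proving $(z_{k+1}, q_{k+1}) \in \Omega$. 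Applying the same argument once more to $(0, -z_k/T)$ yields $x_{k+1} = -z_k + T^2 \delta_{k+2}$ with $|x_{k+1}| \le k_2 T^2$, hence $z_{k+2} = 0$ and $q_{k+2} = -z_{k+1}/T = 0$, which is the dead-beat conclusion. Finite-time attractivity of $\Omega$ then follows from part~(\ref{it:Omegac}): since $k_1 > \sqrt{k_2+L}$ admits $\alpha \in (\sqrt{k_2+L}/k_1, 1)$, and since $\Omega$ contains a neighborhood of the origin while $V_\alpha$ is continuous and positive definite, there is $c > 0$ with $\Omega_{\alpha,c} \subseteq \Omega$, and $\Omega_{\alpha,c}$ is reached in finite time.

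The main obstacle I anticipate is the technical verification of the hypotheses of Lemma~\ref{lem:lyap:discrete} for our inclusion---in particular the upper semicontinuity of $\F$ at $z=0$ in the presence of both the set-valued sign and the adversarial $\delta_{k+1}$---and the algebraic bookkeeping in part~(\ref{it:Omega}) needed to show that the unknown disturbance $\delta_{k+1}$ cancels from the expression for $q_{k+1}$ so that $q_{k+1} = -z_k/T$ depends only on the measurable past state.
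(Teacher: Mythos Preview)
Your proposal is correct and, for part~(\ref{it:Omegac}), essentially identical to the paper's argument: both relax \eqref{eq:closedloop:discrete} to the implicit inclusion $\x_{k+1}\in\x_k+T\F(\x_{k+1})$, invoke Lemma~\ref{lem:lyap:discrete} with the quasiconvex Lyapunov function $V_\alpha$ from Lemma~\ref{lem:lyap}, and then exploit the strictly negative supremum of $Q_T$ on a compact annulus away from the origin to obtain a uniform decrement and hence finite-time attractivity of every $\Omega_{\alpha,c}$.

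For part~(\ref{it:Omega}) your route differs slightly from the paper's. The paper proves forward invariance of $\Omega$ by contraposition, working directly with the implicit relations \eqref{eq:closedloop:discrete}: assuming $(z_{k+1},q_{k+1})\notin\Omega$, it distinguishes $z_{k+1}\ne0$ and $z_{k+1}=0$ and shows in each case that $(z_k,q_k)\notin\Omega$. You instead compute forward using the explicit control law of Theorem~\ref{th:explicit}: from $x_k=z_k+Tq_k+T^2\delta_{k+1}$ and $|z_k+Tq_k|\le(k_2-L)T^2$ you land in the second branch of \eqref{eq:impl3:explicit}, obtain $z_{k+1}=0$ and $q_{k+1}=-z_k/T$ (with the disturbance indeed cancelling), and read off both forward invariance and the dead-beat statement $z_{k+2}=q_{k+2}=0$ in one stroke. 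Both arguments are short and rigorous; yours has the pedagogical advantage of making the two-step dead-beat mechanism explicit and of revealing why the disturbance term $\delta_{k+1}$ plays no role once the trajectory is in $\Omega$, while the paper's contrapositive avoids any dependence on Theorem~\ref{th:explicit} and stays entirely within the implicit formulation \eqref{eq:closedloop:discrete}. The finite-time attractivity of $\Omega$ is handled identically in both: pick $\alpha\in(\sqrt{k_2+L}/k_1,1)$ and a small $c>0$ with $\Omega_{\alpha,c}\subseteq\Omega$, then appeal to part~(\ref{it:Omegac}).
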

\begin{rem}
    Item~(\ref{it:Omegac}) of this lemma implies asymptotic stability of the origin of \eqref{eq:closedloop:discrete}, and item~(\ref{it:Omega}) implies its finite-time attractivity, for all admissible disturbances.
\end{rem}

\begin{proof}[Proof of Theorem~\ref{th:stability}]
    From Lemma~\ref{lem:invariance}, item~(\ref{it:Omega}), there exists $\tilde K \in \NN_0$ such that $(z_k, q_k) \in \Omega$, and consequently $z_{k+2} = q_{k+2} = 0$ for all $k \ge \tilde K$.
    Thus, $v_k - w_{k-2} = 0$ \changed{and $x_k = z_k + T q_k + T (w_{k-1} - w_{k-2}) = T(w_{k-1} - w_{k-2})$ are}
obtained from \eqref{eq:zk}, \eqref{eq:qk} for all $k \ge K = \tilde K +2$.
\changed{Noting that $|\dot w(t)| \le L$ implies $|w_{k+1} - w_k| \le LT$, the bound $|x_k| \le LT^2$ for all integers $k \ge K$ follows.}

\changed{To prove $|x(t)| \le LT^2$ for all $t \in [KT, \infty)$, suppose to the contrary---without restricting generality---that $k \ge K$, $t \in (kT, (k+1)T)$ exist with $x(t) > LT^2$.
    \changedc{Absolute} continuity \changedc{of $x$} and $x_k \le LT^2$ then guarantee existence of $\tau \in (kT, t)$ with $x(\tau) \ge LT^2$ and $0 < \dot x(\tau) = u_k + w(\tau)$.
    Now, modify the disturbance $w$ after $\tau$ such that it is kept constant on the interval $[\tau, (k+1)T]$.
After this modification, $|\dot w(t)| \le L$ still holds \changedc{almost everywhere} and $\dot x(t)$ is a positive constant on that interval, yielding the contradiction $x_{k+1} > x(\tau) \ge LT^2$.}
\end{proof}

\subsection{Saturated Control Input}
\label{sec:stability:saturated}

Consider now the closed loop formed by interconnecting the plant \eqref{eq:plant:discrete} and the proposed conditioned control law \eqref{eq:implc}.
In this case, it is more convenient to write the closed-loop dynamics using the variables $z_k$ and $v_k$ as well as the auxiliary unsaturated control input $\bar u_k$ as
\begin{subequations}
    \begin{align}
        z_{k+1} &= z_k + T \bigl(\sat_U(\bar u_k) - v_{k+1} + v_k + w_{k-1} \bigr) \\
        v_{k+1} &\in v_k - T k_2 \spow{2 v_{k+1} - v_k - \sat_U(\bar u_k)}{0} \\
        \bar u_k &= -k_1 \spowf{z_{k+1}}{1}{2} + 2 v_{k+1} - v_k.
    \end{align}
\end{subequations}
If the saturation is inactive, i.e., if $|\bar u_k|\le U$, then this closed loop reduces to the unsaturated closed loop and may be written as \eqref{eq:closedloop:discrete} with state variables $z_k$ and $q_k$.

The next lemma establishes forward invariance and global finite-time attractivity of a hierarchy of three sets $\mathcal{M}_1 \supset \mathcal{M}_2 \supset \mathcal{M}_3$.
It allows to conclude that $|\bar u_k|\le U$ is established and maintained indefinitely after a finite time as trajectories enter $\mathcal{M}_3$.
\begin{lem}
    \label{lem:invariance:saturated}
    Let $k_1, k_2, T, U, \delta \in \RR_{> 0}$ and $L,W  \in \RR_{\ge 0}$.
    Suppose that $U > W + k_2 T$ and $k_2 > L$.
    Consider the closed loop formed by the interconnection of \eqref{eq:plant:discrete} and \eqref{eq:implc}, with $w_k$ satisfying $|w_k| \le W$ and $|w_{k+1} - w_{k}| \le LT$ for all $k \in \NN_0$, and consider the trajectories $(z_k, v_k, \bar u_k)$ of $z_k$ defined in \eqref{eq:qk}, $v_k$ as in \eqref{eq:implc:v}, and $\bar u_k$ defined in \eqref{eq:implc:u}.
    Then, the following sets are forward invariant and finite-time attractive along closed-loop trajectories:
    \begin{enumerate}[(a)]
        \item
            \label{it:M1}
            $\mathcal{M}_1 = \{ (z,v,\bar u) \in \RR^3 : |v| \le U \}$,
        \item
            \label{it:M2}
            $\mathcal{M}_2 = \{ (z,v,\bar u) \in \mathcal{M}_1 : |z| \le \frac{(U+W+\delta)^2}{k_1^2} \}$,
        \item
            \label{it:M3}
            $\mathcal{M}_3 = \{ (z,v,\bar u) \in \mathcal{M}_2 : |\bar u| \le U \}$, if $k_1$ additionally satisfies
    \begin{align}
        \label{eq:cond:delta}
        k_1 > \sqrt{2 k_2 \frac{\umax + \wmax + \delta}{\umax - \wmax - k_2 T}}.
    \end{align}
    \end{enumerate}
\end{lem}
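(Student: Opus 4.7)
The plan is to establish the three nested sets in sequence, $\mathcal{M}_1 \supset \mathcal{M}_2 \supset \mathcal{M}_3$, combining forward invariance via case analysis of the explicit law \eqref{eq:implc:explicit} with finite-time attractivity arguments that exploit the saturation bound $\abs{u_k}\le U$ and the plant dynamics. The overall strategy mirrors the three-step analysis of the continuous-time conditioned super-twisting controller in \citep{seeber2020conditioned}, adapted to the implicit discrete-time update.

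For item (\ref{it:M1}), forward invariance of $\mathcal{M}_1$ follows by inspecting the two branches of \eqref{eq:implc:explicit:v}: the second branch writes $v_{k+1}=(v_k+u_k)/2$ as a convex combination of points in $[-U,U]$, and the first branch yields $v_{k+1}=v_k-k_2 T\sign(v_k-u_k)$, where the case condition $\abs{v_k-u_k}>2k_2 T$ together with $\abs{u_k}\le U$ keeps $\abs{v_{k+1}}\le U$. For finite-time attractivity, as long as $\abs{v_k}>U+2k_2 T$ the first branch is active and $\abs{v_k}$ drops by exactly $k_2 T$ per step, so the trajectory enters the residual band $\abs{v_k}\in(U,U+2k_2 T]$ in finitely many steps. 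Ruling out persistence inside this band requires invoking the plant: sustained saturation $u_k=\sign(v_k) U$ drives $x_k$ monotonically at rate at least $U-W>0$, and the $\sqrt{\abs{x_k}}$ feedback in \eqref{eq:implc:explicit:u} then flips the sign of $\hat u_k$ in finite time, after which one first-branch step with $\abs{v_k-u_k}>2k_2 T$ brings $\abs{v_k}$ below $U$.

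For item (\ref{it:M2}), assume the trajectory lies in $\mathcal{M}_1$. The implicit relation $k_1\spow{z_{k+1}}{1/2}=2v_{k+1}-v_k-\bar u_k$ from \eqref{eq:implc:u}, combined with the bound $\abs{2v_{k+1}-v_k}\le U+2k_2 T$ extracted from both branches of \eqref{eq:implc:explicit:v}, shows that whenever $\abs{z_{k+1}}$ is large enough the resulting $\bar u_k$ forces $u_k$ to saturate in the sign opposite to $z_{k+1}$; substituting into $z_{k+2}-z_{k+1}=T(u_{k+1}+w_k+v_{k+1}-v_{k+2})$ with $\abs{v_{k+1}-v_{k+2}}\le k_2 T$ then yields a strict decrease of at least $T(U-W-k_2 T)>0$ per step, producing both forward invariance and finite-time attractivity of $\mathcal{M}_2$. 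Item (\ref{it:M3}) is established by plugging the bound $k_1\abs{z_{k+1}}^{1/2}\le U+W+\delta$ back into the expression for $\bar u_k$; condition \eqref{eq:cond:delta} is precisely the threshold needed for the resulting estimate of $\abs{\bar u_k}$ to stay below $U$. Once $\abs{\bar u_k}\le U$, the controller is unsaturated, \eqref{eq:implc} reduces to the unsaturated system \eqref{eq:impl3}, and forward invariance follows by the unsaturated analysis underlying Lemma~\ref{lem:invariance}.

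The main obstacle I anticipate is the residual-band step in item (\ref{it:M1}): the case-2 update $v_{k+1}=(v_k+u_k)/2$ in isolation gives only geometric (not finite-time) decrease of the excess $\abs{v_k}-U$, so genuine finite-time convergence must be extracted by coupling the $v$-dynamics to the $x$-dynamics and quantifying how many steps of sustained saturation can occur before the $\sqrt{\abs{x_k}}$ feedback forces a sign change in $\hat u_k$. A secondary technical pinch point is tightly reconciling the $z$-bound inherited from $\mathcal{M}_2$ with the saturation criterion in $\mathcal{M}_3$ so that \eqref{eq:cond:delta} emerges as exactly the parameter threshold required.
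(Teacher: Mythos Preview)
The real gap is item~(\ref{it:M3}). Plugging the $\mathcal{M}_2$ bound directly into \eqref{eq:implc:u} gives only
\[
|\bar u_k|\le k_1|z_{k+1}|^{1/2}+|2v_{k+1}-v_k|\le (U+W+\delta)+(U+k_2T),
\]
which is never $\le U$; condition~\eqref{eq:cond:delta} is a \emph{lower} bound on $k_1$ and cannot improve a static estimate of this kind. The paper's mechanism is entirely different: it proves the backward monotonicity $\bar u_k>U\Rightarrow\bar u_{k-1}\ge\bar u_k+T\epsilon$ with $\epsilon=\tfrac{k_1^2}{2}\tfrac{U-W-k_2T}{U+W+\delta}-k_2$, and \eqref{eq:cond:delta} is exactly the condition $\epsilon>0$. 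The key step uses $u_k=U$ and \eqref{eq:zkp1:2} to get $z_{k+1}-z_k\ge T(U-W-k_2T)$, then applies the chord bound of Lemma~\ref{lem:auxlem} on $[-Z,Z]$ with $Z=(U+W+\delta)^2/k_1^2$ to control $\spow{z_{k+1}}{1/2}-\spow{z_k}{1/2}$ from below; even then, the residual $\gamma_k=(v_k-v_{k-1})-(v_{k+1}-v_k)$ requires a further case split on whether $v_{k-1}$ lies above or below $U-2k_2T$. Your appeal to Lemma~\ref{lem:invariance} for forward invariance of $\mathcal{M}_3$ is also insufficient: the unsaturated closed loop keeps $\Omega_{\alpha,c}$ invariant, not $\{|\bar u|\le U\}$. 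In the paper, forward invariance of $\mathcal{M}_3$ is a corollary of the same backward step ($\bar u_{k+1}>U\Rightarrow\bar u_k>U$, contradiction).

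Items~(\ref{it:M1}) and~(\ref{it:M2}) are closer to the mark but also need repair. In~(\ref{it:M2}) the relation you derive ties $u_k$ to $z_{k+1}$, so substituting into \eqref{eq:zkp1:2} naturally produces the backward bound $z_k\ge z_{k+1}+\epsilon T$ when $z_{k+1}$ is large, not a forward statement about $z_{k+2}$; moreover, the unsaturated branch $|\bar u_k|\le U$ must be treated separately (there the $-k_1\spow{z_{k+1}}{1/2}$ term itself supplies a decrease $\ge T\delta$). For the residual band in~(\ref{it:M1}), the assumption of sustained saturation $u_k=\sign(v_k)U$ is unjustified. The paper avoids it: assuming $v_k>U$ for all $k$, $(v_k)$ is strictly decreasing and bounded below, hence eventually $|v_{k+1}-v_k|<k_2T$; by \eqref{eq:implc:v} this forces $2v_{k+1}-v_k-u_k=0$, so $u_k=2v_{k+1}-v_k>0$ and $\bar u_k\ge u_k>0$. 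Then \eqref{eq:zkp1:2} gives $z_{k+1}-z_k=T(v_{k+1}+w_{k-1})\ge T(U-W)$, so $z_k\to+\infty$, and \eqref{eq:implc:u} eventually yields $\bar u_k<0$, a contradiction.
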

\changedb{The proof of the lemma is given in  Appendix~\ref{proof-lem:invariance:saturated}.}
\begin{proof}[Proof of Theorem~\ref{th:stability:conditioned}]
    Choose $\delta \in \RR_{> 0}$ sufficiently small such that \eqref{eq:cond:delta} is satisfied.
    From Lemma~\ref{lem:invariance:saturated}, item~(\ref{it:M3}), there then exists $\tilde K_1 \in \NN_0$ such that $|\bar u_k| \le U$ holds for all $k \ge \tilde K_1$.
    Then, the saturation in \eqref{eq:implc:usat} becomes inactive, i.e., $u_k = \bar u_k$ holds for all $k \ge \tilde K_1$.
    Noting that $u_k$ then satisfies \eqref{eq:impl3:u}, that $v_{k+1}$ then fulfills
    \begin{equation}
        v_{k+1} \in v_k - T k_2 \spow{2 v_{k+1} - v_k - \bar u_k}{0} = v_k - T k_2 \spow{z_{k+1}}{0},
    \end{equation}
    i.e., \eqref{eq:impl3:v}, and that
    $
        k_1 > \sqrt{2 k_2} > \sqrt{k_2 + L}
        $
        holds, the claim then follows from Theorem~\ref{th:stability}.
\end{proof}

\section{Simulation Results}
\label{sec:simulation}

In order to demonstrate the effectiveness of the proposed discrete-time STC as well as the discrete-time conditioned STC, the following simulations were performed. In all simulations, the disturbance signal \mbox{$w(t) = W \eta(\frac{L}{W} (t - T) - 1)$} was applied, with the normalized sawtooth-function $\eta : \RR \to \RR$ defined as
\begin{align}\label{eq:normalized-sawtooth}
	\eta(t) &= |(t \text{ mod } 4) - 2| - 1 \nonumber \\
    &\changedc{= \begin{cases}
        \eta(t+4) & t < 0 \\
        1 - t & t \in [0, 2) \\
        t - 3 & t \in [2, 4) \\
        \eta(t-4) & t \ge 4,
\end{cases}}
\end{align}
$L=5$, $W=0.25$, and the sampling time $T=0.01$.
This disturbance $w(t)$ fulfills $|\dot w(t)| \leq L$ and $|w(t)| \leq W$, \changedc{and is displayed in Fig.~\ref{sim_disturbance}}. 
The corresponding discrete-time disturbance $w_k$ according to~\eqref{eq:wk} \changedc{is, in this case, given by
    \begin{equation}
        w_k = \begin{cases}
            0.05 k - 0.025 & k \in \{0, 1, 2, 3, 4, 5\} \\
            0.525 - 0.05 k & k \in \{6, 7, 8, 9 \} \\
            - w_{k-10} & k \ge 10,
        \end{cases}
    \end{equation}
and} is depicted as well.
Note that the discrete-time disturbance $w_k$ fulfills the corresponding discrete-time bounds $|w_{k+1}-w_k| \leq LT \changedc{= 0.05}$ and $|w_k| \leq W \changedc{= 0.25}$.

\begin{figure}
	\centering
    \includegraphics[width=\linewidth]{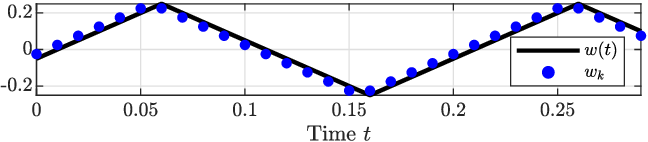}
    \caption{\changedc{Disturbance signal $w$ and corresponding discrete-time disturbance $w_k$ with $L = 5$, $W = 0.25$ and sampling time $T = 0.01$ applied in both simulations.}}\label{sim_disturbance}
\end{figure}

Fig.~\ref{sim_without_saturation} shows the results of the STC without actuator saturation. The proposed controller is compared with the semi-implicitly discretized STC by\changed{~\cite{xiong2022discrete}} and with the original implicit discretization by~\cite{brogliato2020implicit}. It can be observed that the original implicit discretization does not manage to drive the state $x_k$ into the best worst-case error band $|x_{k}| \leq LT^2$ from Proposition~\ref{prop:best}. Instead, the remaining control error is proportional to the disturbance $w_k$ itself. For the selected controller gains, $k_2 = 10$ and $k_1 = 27$, the semi-implicitly discretized STC shows a significantly larger convergence time compared to the implicit discretizations.

\begin{figure}
	\centering
    \includegraphics[width=\linewidth]{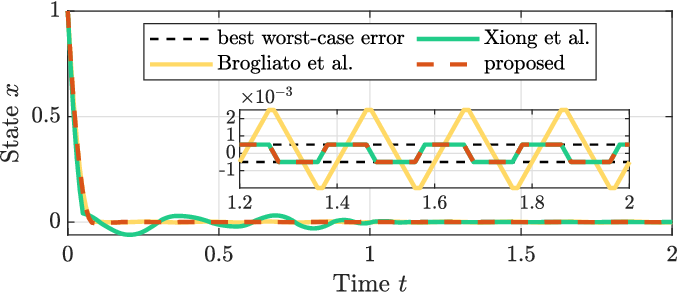}
	\caption{Results of the discrete-time STC without actuator saturation. Parameters: $k_2=10$, $k_1=27$.}\label{sim_without_saturation}
\end{figure}

\begin{figure}
	\centering
    \includegraphics[width=\linewidth]{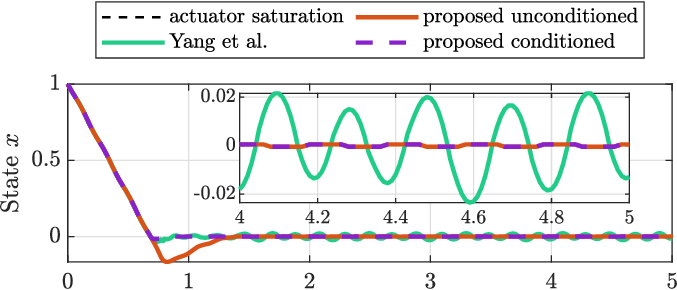}
    \includegraphics[width=\linewidth]{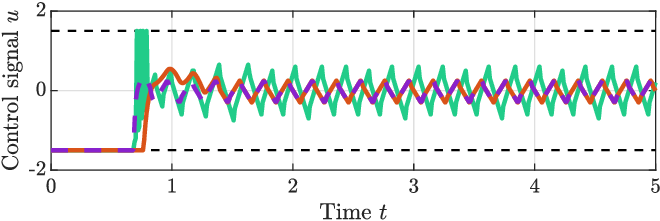}
	\caption{Results of the discrete-time STC in case of saturated actuation. Parameters: $k_2=10$, $k_1=16$, $U=1.5$. Top: plant state $x$, bottom: control signal $u$.}\label{sim_conditioned}
\end{figure}

Fig.~\ref{sim_conditioned} shows the results of the discrete-time STC in the case of an actuator saturation. The saturation was set to $U=1.5$. The proposed algorithms are compared to the conditioned STC by~\cite{yang2022semi}. The \changedc{proposed} conditioned STC stops the integration within the controller state when it is in the saturation, which the unconditioned controller does not. This leads to a reduced convergence time of the conditioned controller compared to the unconditioned controller and to a largely reduced undershoot of the conditioned controller compared to the unconditioned STC. The conditioned controller by~\cite{yang2022semi} also stops the integration of the controller state, which leads to a reduced convergence time as well compared to the unconditioned controller. However, for the selected parameters $k_2 = 10$ and $k_1 = 16$, the conditioned controller by~\cite{yang2022semi} does not yield the same accuracy as the proposed controllers. This result contradicts~\citep[Theorem 1]{yang2022semi} and was already addressed in~\citep{seeber2023discussion} in a counterexample. Also, upon the zero-crossing of the state $x$, the control signal of the conditioned controller by \cite{yang2022semi} exhibits a high-frequency switching behavior.

\section{Conclusion}
\label{sec:conclusion}

A new implicit discretization of the super-twisting controller was proposed.
In contrast to existing approaches, the proposed controller can handle the same class of disturbances as its continuous-time counterpart while also achieving best possible worst-case performance and being intuitive to tune.
For the case of constrained actuators, the proposed discretization was extended to the conditioned super-twisting controller.
The resulting implicit conditioned super-twisting controller mitigates windup by means of the conditioning technique and features similarly simple stability conditions as its continuous-time counterpart.
Numerical simulations demonstrated the superior performance of the proposed approach in comparison to existing approaches, as well as the proven stability and performance guarantees.
Future work may study extensions of the proposed discretization to other higher order sliding-mode control laws.

\section*{\changedc{Acknowledgement}}

\changedc{The authors would like to thank one anonymous reviewer for their extensive comments and suggestions, which have helped to improve the paper.}

\appendix

\section{Proofs}

The following auxiliary lemma is used in the proofs.
\begin{lem}
    \label{lem:auxlem}
    Let $Z \in \RR_{> 0}$ and  $z_k, z_{k+1} \in [-Z,Z]$.
    Suppose that $z_{k+1} \ge z_k$.
    Then,
    $
    \spowf{z_{k+1}}{1}{2}  \ge  \spowf{z_k}{1}{2} + \frac{z_{k+1} - z_k}{2 \sqrt{Z}}.
        $
\end{lem}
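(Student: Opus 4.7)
My approach would be a direct case analysis based on the signs of $z_k$ and $z_{k+1}$. Writing $f(y) = \spowf{y}{1}{2} = \sign(y)|y|^{1/2}$, the inequality to be shown is $f(z_{k+1}) - f(z_k) \ge \frac{z_{k+1} - z_k}{2\sqrt{Z}}$. Since $f$ is monotonically nondecreasing and the hypothesis $z_{k+1} \ge z_k$ ensures the right-hand side is nonnegative, the job is essentially to show that the ``effective slope'' of $f$ between $z_k$ and $z_{k+1}$ is at least $\tfrac{1}{2\sqrt{Z}}$. Intuitively this works because on $[-Z, Z]$ the derivative $f'(y) = \tfrac{1}{2|y|^{1/2}}$ is bounded below by $\tfrac{1}{2\sqrt{Z}}$ away from zero and blows up at the origin, so the singularity only helps.

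First I would handle the case $0 \le z_k \le z_{k+1} \le Z$, where $f(z_{k+1}) - f(z_k) = \sqrt{z_{k+1}} - \sqrt{z_k}$ can be rationalized as $(z_{k+1} - z_k)/(\sqrt{z_{k+1}} + \sqrt{z_k})$, and the denominator is bounded above by $2\sqrt{Z}$. The case $-Z \le z_k \le z_{k+1} \le 0$ is analogous by symmetry, using $f(y) = -\sqrt{-y}$ and rationalizing $\sqrt{-z_k} - \sqrt{-z_{k+1}}$.

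The third and most delicate case is $z_k < 0 < z_{k+1}$, where the derivative-based argument breaks down at $0$. Here I would set $a = \sqrt{z_{k+1}} \in [0, \sqrt{Z}]$ and $b = \sqrt{-z_k} \in [0, \sqrt{Z}]$, so that $f(z_{k+1}) - f(z_k) = a + b$ and $z_{k+1} - z_k = a^2 + b^2$. The required inequality becomes $a + b \ge (a^2 + b^2)/(2\sqrt{Z})$, which follows from the individual bounds $a^2 \le a\sqrt{Z}$ and $b^2 \le b\sqrt{Z}$ (valid because $a, b \le \sqrt{Z}$), giving $a^2 + b^2 \le (a+b)\sqrt{Z} \le 2\sqrt{Z}(a+b)$ after summation.

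I expect this third case to be the main obstacle, since one cannot simply integrate $f'$ across the origin; however, the explicit substitution in terms of $a$ and $b$ reduces it to a clean two-variable inequality. The remaining cases such as $z_k = z_{k+1}$ reduce both sides to zero and pose no difficulty.
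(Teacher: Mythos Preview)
Your proof is correct. The paper takes a different, more unified route: it fixes $z_k$, sets $\alpha_k = z_{k+1} - z_k \ge 0$, and defines the single-variable function $h(\alpha) = \spowf{z_k+\alpha}{1}{2} - \alpha/(2\sqrt{Z})$; since $|z_k+\alpha| \le Z$ on $[0,\alpha_k]$, one has $h'(\alpha) = \tfrac{1}{2|z_k+\alpha|^{1/2}} - \tfrac{1}{2\sqrt{Z}} \ge 0$ wherever the derivative exists, so $h(\alpha_k) \ge h(0)$, which is exactly the claim. This avoids your three-way case split at the cost of sweeping the non-differentiability of $\spowf{\cdot}{1}{2}$ at the origin under the rug (one must tacitly use that $h$ is continuous with nonnegative derivative almost everywhere, so monotonicity survives the singular point). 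Your approach is longer but entirely elementary---pure algebra, no calculus---and in particular your sign-crossing case via $a=\sqrt{z_{k+1}}$, $b=\sqrt{-z_k}$ makes the behavior at the origin completely explicit rather than implicit. Either argument is fine; yours is more self-contained, the paper's is more compact.
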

\begin{proof}
    Let $\alpha_k = z_{k+1} - z_k \ge 0$ and define the function $h : \RR_{\ge 0} \to \RR$ as
    $
    h(\alpha) = \spowf{z_{k} + \alpha}{1}{2} - \alpha/(2\sqrt{Z}).
        $
        Then, for all $\alpha \in [0, \alpha_k]$ its derivative fulfills $\deriv{h}{\alpha} \ge 0$, since \mbox{$|z_k+\alpha| \le Z$}.
Thus, $h(\alpha_k) \ge h(0)$.
\end{proof}

\subsection{Proof of Proposition~\ref{prop:best}}
\label{proof-prop-best}
\vspace{-0.4cm}
    For $M \in \NN$, define an auxiliary Lipschitz continuous function $\eta_M : \RR_{\ge 0} \to \RR$ as
    \begin{equation}
        \eta_M(t) = \begin{cases}
            \eta(t) & \text{if $t \in [0, 2M)$} \\
            (-1)^M [ 1 + (t-2M) ] & \text{if $t \in [2M, 2M+2)$} \\
            3 (-1)^M & \text{if $t \in [2M+2,\infty)$}
        \end{cases}
    \end{equation}
    with the sawtooth function $\eta$ defined as in~\eqref{eq:normalized-sawtooth}.
It is easy to verify that $\eta_M$ is Lipschitz continuous, and satisfies the inequalities $|\eta_M(t)| \le 3$ and $|\dot \eta_M(t)| \le 1$ almost everywhere.
    Moreover, $\int_{2\ell}^{2\ell+2} \eta(\tau)\,\diffd \tau = 0$ holds for all integers $\ell \in[0, M-1]$ and $\int_{2M}^{2M+2} \eta_M(\tau)\,\diffd \tau = (-1)^M \cdot 4$.

    Now, define $w(t) = -\frac{q LT}{2} \eta_{K+1}(\frac{2t}{T})$ with $q \in \{-1,1\}$ to be specified.
    Then, $w_k = 0$ for $k = 0, \ldots, K$ regardless of $q$, and $w_{K+1} = (-1)^{K} qLT$ according to \eqref{eq:wk}.
    Thus, $x_{0}, \ldots, x_{K+1}$ and hence also $u_{0}, \ldots, u_{K+1}$ are independent of $q$.
    Let
    \begin{equation}
        q = \begin{cases}
            (-1)^{K} & \text{if } 2 x_{K+1} - x_K + T (u_{K+1} - u_K) \ge 0 \\
            (-1)^{K+1} & \text{otherwise}.
        \end{cases}
    \end{equation}
    Then, \eqref{eq:plant:discrete} implies
    \begin{align}
        x_{K+2} &= 2 x_{K+1} - x_K + T ( u_{K+1} - u_K + w_{K+1} - w_K) \nonumber\\
        &= \frac{(-1)^{K}}{q} (|2 x_{K+1} - x_K + T ( u_{K+1} - u_K ) | + L T^2)\hspace{-0.3ex}
    \end{align}
    which yields $|x_{K+2}| \ge LT^2$, concluding the proof.
    \qed

\subsection{\changed{Proof of Lemma~\ref{lem:zkp1:explicit}}}\label{proof-lem:zkp1:explicit}
\vspace{-0.4cm}

Substituting \eqref{eq:impl3} into \eqref{eq:zkp1:1} yields the generalized equation
\begin{equation}
    \label{eq:generalizedeq}
    z_{k+1} \in x_k - T k_1 \spowf{z_{k+1}}{1}{2} - T^2 k_2 \spow{z_{k+1}}{0}.
\end{equation}
If $|x_k| \le k_2 T^2$, then its unique solution is $z_{k+1} = 0$.
Otherwise, $z_{k+1}$ and $x_k$ have the same sign, and multiplying \eqref{eq:generalizedeq} by $\sign(x_k) = \sign(z_{k+1})$ yields the equation
\begin{equation}
    |z_{k+1}| = |x_k| - T k_1 \apowf{z_{k+1}}{1}{2} - T^2 k_2
\end{equation}
whose unique solution is
\begin{align}
    \apowf{z_{k+1}}{1}{2} &= - \frac{T k_1}{2} + \sqrt{\frac{T^2 k_1^2}{4} - T^2 k_2 + |x_k|}.
\end{align}
Substituting $k_2 = \lambda + \frac{k_1^2}{4}$  yields \eqref{eq:impl3:explicit:z}.
\qed

\subsection{\changed{Proof of Lemma~\ref{lem:implc:equivalence}}}\label{proof-lem:implc:equivalence}
\vspace{-0.4cm}
    Distinguish cases $|\hat u_k| \le \umax$ and $|\hat u_k| > \umax$.
    In the first case, $\bar u_k = u_k = \hat u_k$, \changed{$z_{k+1} = \hat z_{k+1}$, $v_{k+1} = \hat v_{k+1}$ may be verified to be} a solution of \eqref{eq:zkp1:1}, \eqref{eq:implc} \changed{by using \eqref{eq:impl:aux}}.
    In the second case, suppose that $\hat u_k > \umax$; the proof for $\hat u_k < -\umax$ is obtained analogously.
    Set $u_k = U$, and let $v_{k+1}$, $\bar u_k$, and $z_{k+1}$ be uniquely defined by \eqref{eq:implc:v}, \eqref{eq:implc:u}, and \eqref{eq:zkp1:1}.
    It will be shown that $\bar u_k > U$, proving that also \eqref{eq:implc:usat} holds.
    To that end, distinguish the two cases $\hat v_{k+1} \le v_{k+1}$ and $\hat v_{k+1} > v_{k+1}$.
    In the first case,
    \begin{equation}
        \label{eq:zkp1:le:zhatkp1}
        z_{k+1} = x_k + T (u_k - v_{k+1}) \le x_k + T(\hat u_k - \hat v_{k+1}) = \hat z_{k+1}
    \end{equation}
    follows from \eqref{eq:zkp1:1}, \eqref{eq:impl:aux:z}, and thus \eqref{eq:implc:u}, \eqref{eq:impl:aux:u} yield
    \begin{align}
        \bar u_k &= -k_1 \spowf{z_{k+1}}{1}{2} + 2 v_{k+1} - v_{k} \nonumber \\
        &\ge -k_1 \spowf{\hat z_{k+1}}{1}{2} + 2 \hat v_{k+1} - v_{k}  = \hat u_k > U.
    \end{align}
\changed{For the second case, use \eqref{eq:impl:aux:u} and $\spow{k_1 \spowf{x}{1}{2}}{0} = \spow{x}{0}$ to rewrite \eqref{eq:impl:aux:v} as $\hat v_{k+1} \in v_k - T k_2 \spow{2 \hat v_{k+1} - v_k - \hat u_k}{0}$, and note that the expression $\spow{2 v_{k+1} - v_k - u_k}{0}$ in \eqref{eq:implc:v} exceeds the one in that inclusion due to $\hat v_{k+1} > v_k$; hence}
\begin{align}
            \label{eq:auximpl:ineq1}
\changed{2 v_{k+1} - v_k - u_k \ge 0 \ge 2 \hat v_{k+1} - v_k - \hat u_k} \end{align}
\changed{holds.
    Substituting \eqref{eq:impl:aux:u} yields $0 \ge \hat z_{k+1}$, and \eqref{eq:auximpl:ineq1} along with $\hat v_{k+1} > v_{k+1}$ further implies}
\begin{equation}
        u_k - v_{k+1} \le v_{k+1} - v_k < \hat v_{k+1} - v_k \le \hat u_k - \hat v_{k+1}.
    \end{equation}
Thus, $z_{k+1} < \hat z_{k+1} \le 0$ is concluded as in \eqref{eq:zkp1:le:zhatkp1}.
    Then,
    \begin{equation}
        \bar u_k >  2 v_{k+1} - v_{k} \ge u_k = U
    \end{equation}
    follows from \eqref{eq:implc:u}, \eqref{eq:auximpl:ineq1}, concluding the proof.
	\qed

    \subsection{\changedb{Proof of Lemma~\ref{lem:lyap:discrete}}}\label{proof-lem:lyap:discrete}
\vspace{-0.4cm}

\changedb{Define $Q_T(\x) = \max_{\h \in \F(\x)} V(\x) - V(\x - T \h)$, which is well-defined due to compactness of $\F(\x)$ and continuity of $V$.
To see upper semicontinuity, consider a sequence $(\x_i)$ with limit $\bar \x$ and corresponding $\h_i \in \F(\x_i)$ such that $Q_T(\x_i) = V(\x_i) - V(\x_i - T \h_i)$ and $\lim_{i\to \infty} Q_T(\x_i) = \limsup_{\x \to \bar\x} Q_T(\x)$.
        Then, $\h_i$ converges to the compact set $\F(\bar \x)$ due to upper semicontinuity of $\F$; thus, select subsequences such that $(\h_i)$ converges to some $\bar \h \in \F(\bar\x)$.
        Upper semicontinuity then follows from $\lim_{i\to\infty} Q_T(\x_i) = V(\bar \x) - V(\bar \x - T \bar \h) \le Q_T(\bar \x)$.

        To prove negative definiteness of $Q_T$, suppose to the contrary that there exist $\x \in \RR^n \setminus \{ \bm{0} \}$ and $\h \in \F(\x)$ such that $V(\x) - V(\x - T\h) \ge 0$.
        Since $V$ is locally Lipschitz at $\x$, $\partial V(\x)$ is nonempty and compact and $\partial V$ is upper semicontinuous at $\x$.
        Hence, $\vzeta \in \partial V(\x)$ exists such that $\vzeta\TT (- T\h) > 0$.
        Since $V$ is quasiconvex, application of \cite[Theorem 2.1]{daniilidis1999characterization} yields $V(\x - \lambda_1 T \h) \le V(\x - \lambda_2 T \h)$ for all $0 \le \lambda_1 \le \lambda_2 \le 1$.
        Consequently, $V(\x) = V(\x - \lambda T \h)$ for all $\lambda \in [0, 1]$, i.e., $V$ is constant on the line segment from $\x$ to $\x - T \h$.
        Thus, \cite[Lemma 2.1]{daniilidis1999characterization} yields $\vzeta\TT \h = 0$ for all $\vzeta \in \partial V(\x - \lambda T \h)$ and all $\lambda \in (0,1)$.
        Choose any sequence $(\lambda_i)$ tending to zero and converging $\vzeta_i \in \partial V(\x - \lambda_i T \h)$.
        Due to upper semicontinuity of $\partial V$, then $\bar \vzeta = \lim_{i \to \infty} \vzeta_i \in \partial V(\x)$, but $\bar \vzeta\TT \h = \lim_{i\to\infty} \vzeta_i\TT \h = 0$, contradicting \eqref{eq:lyap:condition}.
        \qed
}

\subsection{\changed{Proof of Lemma~\ref{lem:lyap}}}\label{proof-lem:lyap}
\vspace{-0.4cm}
Continuity and positive definiteness of $V_\alpha$ as well as the fact that it is a strict Lyapunov function\changedb{\footnote{\changedb{To verify condition \eqref{eq:lyap:condition} at points where $V_{\alpha}$ is not differentiable, note that Clarke's generalized gradient is the convex hull of adjacent (classical) gradients at such points.}}} for \eqref{eq:closedloop:continuous} under the stated conditions are shown in \cite[Section 3]{seeber2017stability}.
    \changedb{Local Lipschitz continuity outside the origin is obvious from the fact that the square root is zero only if $z = q = 0$.}
From the definition of $V_\alpha$ and its continuity, one can see that $(z, q) \in \Omega_{\alpha,c}$ if and only if the inequalities
            $|12 \alpha^2 k_1^2 z - 2 c q| \le c^2 - 3 q^2$,
$|q| \le \frac{c}{3}$
    hold \cite[see also][Fig.~1]{seeber2017stability}.
Since both inequalities are convex in $(z,q)$, the set $\Omega_{\alpha,c}$ is convex by virtue of being the intersection of two convex sets.
    \qed

\subsection{\changed{Proof of Lemma~\ref{lem:invariance}}}\label{proof-lem:invariance}
\vspace{-0.4cm}
\changedb{For item~(\ref{it:Omegac}), denote $\x = [z \quad q]\TT$ and define compact sets $\Lambda_b = \{ \x \in \RR^2 : V_\alpha(\x) \in [b, 2b]\}$.
        For each $b > 0$, existence of $\epsilon_b > 0$ will be shown such that $\x_{k+1} \in \Lambda_b$ implies $V_{\alpha}(\x_{k+1}) \le V_{\alpha}(\x_k) - \epsilon_b$, which implies finite-time attractivity and forward invariance of $\Omega_{\alpha,c}$.
        To that end, first relax \eqref{eq:closedloop:discrete} to $\x_{k+1} \in \x_k + T \F(\x_{k+1})$ with 
    \begin{equation}
        \F(z, q) = \begin{bmatrix} -k_1 \spowf{z}{1}{2} + q \\ -k_2 \spow{z}{0} + [-L,L] \end{bmatrix}.
    \end{equation}
    From Lemma~\ref{lem:lyap}, $V_{\alpha}$ is a strict Lyapunov function for $\dot \x \in \F(\x)$, i.e., condition \eqref{eq:lyap:condition} of Lemma~\ref{lem:lyap:discrete} is satisfied.
Since also the other conditions of the latter lemma are fulfilled, $V_\alpha(\x_{k+1}) - V_{\alpha}(\x_k) \le \max_{\x \in \Lambda_b} Q_T(\x) =-\epsilon_b$ holds whenever $\x_{k+1} \in \Lambda_b$; this maximum is well-defined due to upper semicontinuity of $Q_T$ and negative due to its negative definiteness.
    This proves item~(\ref{it:Omegac}).}

For item~(\ref{it:Omega}), choose $\alpha \in (0,1)$ sufficiently large such that $k_1 > \frac{\sqrt{k_2+L}}{\alpha}$.
Finite-time attractivity of $\Omega$ is then clear from the fact that it contains a finite-time attractive set $\Omega_{\alpha,c}$ with sufficiently small $c > 0$.
To show forward invariance of $\Omega$, it will be shown that \changed{$(z_{k+1},q_{k+1}) \notin \Omega$ implies $(z_k,q_k) \notin \Omega$.} 
Distinguish the cases $z_{k+1} \ne 0$ and $z_{k+1} = 0$.
In the first case, \changed{the contradiction}
\begin{align}
    &|z_k +T q_k| \nonumber \\
    &\quad= |z_{k+1} + Tk_1 \spowf{z_{k+1}}{1}{2} + T^2 k_2 \sign(z_{k+1}) - T^2 \delta_{k+1}| \nonumber \\
    &\quad> (k_2 - L) T^2.
\end{align}
\changed{is obtained by substituting $z_k$ and $q_k$ using \eqref{eq:closedloop:discrete}.}
In the second case, $|z_{k+1} + T q_{k+1}| > (k_2 - L)T^2$ implies
\begin{align}
    |z_k| &= |z_{k+1} - T q_{k+1} + T k_1 \spowf{z_{k+1}}{1}{2}| \nonumber \\
    &= |T q_{k+1}| > (k_2 - L)T^2.
\end{align}
Finally, $(z_k,q_k) \in \Omega$ implies $z_{k+2} = q_{k+2} = 0$, because then $(z_{k+1},q_{k+1}) \in \Omega$, yielding $z_{k+2} = 0$ as shown above, and thus $T q_{k+2} = z_{k+2} + T k_1 \spowf{z_{k+2}}{1}{2} - z_{k+1} = 0$.
\qed

\subsection{\changed{Proof of Lemma~\ref{lem:invariance:saturated}}}\label{proof-lem:invariance:saturated}
\vspace{-0.4cm}
\changed{For} item~(\ref{it:M1}), it is first shown that $\mathcal{M}_1$ is forward invariant.
    This is seen from the fact that $|v_{k+1}| > U$ and $|v_{k}| \le U$ imply the contradiction
    $
        |v_{k}| = |v_{k+1} + T k_2 \sign(v_{k+1})| > U
    $
    from \eqref{eq:implc:v}, because $\sign(2 v_{k+1} - v_k - u_k) = \sign(v_{k+1})$.
    To show finite-time attractivity, note that $v_k$ cannot change sign without entering $\mathcal{M}_1$, because $U > k_2 T$.
    Hence, without restriction of generality, it is sufficient to show that the assumption $v_k > U$ for all $k \in \NN_0$ leads to a contradiction.
    \changed{Under this assumption}, $v_k$ is strictly decreasing, because \changed{$v_{k+1} \ge v_k$} and \eqref{eq:implc:v} imply the contradiction $v_{k+1} \le v_k - k_2 T$.
    Since $v_k$ is also bounded from below, there exists $\kappa \in \NN$ such that $|v_{k+1} - v_k| \le \frac{k_2 T}{2}$ for all $k \ge \kappa$.
    \changed{Then, the right-hand side of \eqref{eq:implc:v} is truly multivalued, i.e., $2 v_{k+1} - v_k - u_k = 0$. Thus, $0 < U - \frac{k_2 T}{2} \le v_{k+1} + (v_{k+1}- v_k) = u_k \le \bar u_k$}
\changed{and, using \eqref{eq:zkp1:2}, $z_k$ is seen to strictly increase} according to
    \begin{align}
        z_{k+1} -z_k&= T (u_k + w_{k-1} - v_{k+1} + v_k) \\
        &= T (v_{k+1} + w_{k+1}) \ge T(U - W) \ge k_2 T^2, \nonumber
    \end{align}
    eventually leading to the contradiction \changed{$\bar u_k < 0$ in \eqref{eq:implc:u}} for sufficiently large $k > \kappa$, \changed{proving item~(\ref{it:M1})}.

    For item~(\ref{it:M2}), since $\mathcal{M}_2 \subset \mathcal{M}_1$ and due to item~(\ref{it:M1}), it is sufficient to consider trajectories in $\mathcal{M}_1$, i.e., to assume $|v_k| \le U$ \changed{for all $k$}.
    Let $\epsilon = \min\{ \delta, \umax - \wmax - k_2T \}$.
    It will be shown that $k_1^2 z_{k+1} > (\umax + \wmax + \delta)^2$ implies $z_k \ge z_{k+1} + \epsilon T$, from which the claim follows due to $\epsilon > 0$ and symmetry reasons.
    Distinguish the \changedc{three} cases $\bar u_k > \umax$, $\bar u_k < -\umax$, and $|\bar u_k| \le \umax$.
    The first case cannot occur, because, \changed{using \eqref{eq:implc:u} and $|v_{k+1} - v_k| \le k_2T$, the contradiction}
    \begin{align}
        2 W &+  k_2 T < U + W < k_1 \spowf{z_{k+1}}{1}{2} = 2 v_{k+1} - v_k - \bar u_k \nonumber \\
    &\quad< \changed{2 v_{k+1} - v_k - U \le } v_{k+1} - v_k \le k_2 T
    \end{align}
    \changed{is obtained.}
    In the second case, $u_k = -\umax$ and hence
    \begin{align}
        z_k &= z_{k+1} - T(u_k + w_{k-1} + v_k  -v_{k+1}) \nonumber \\
        &\ge z_{k+1} - T(- \umax + \wmax + k_2T) \ge z_{k+1} + T \epsilon.
    \end{align}
    is obtained from \eqref{eq:zkp1:2}.
    And in the third case,
    \begin{align}
        z_k &= z_{k+1} - T(\bar u_k + w_{k-1} + v_k  -v_{k+1}) \\
        &= z_{k+1} - T (-k_1 \spowf{z_{k+1}}{1}{2} + w_{k-1} + v_{k+1}) \nonumber\\
        &\ge z_{k+1} + T ( \umax + \wmax + \delta - \wmax - \umax) \ge z_{k+1} + T \epsilon, \nonumber
    \end{align}
    follows from $u_k = \bar u_k$ and \eqref{eq:implc:u}, proving item~(\ref{it:M2}).

    To show item~(\ref{it:M3}), since $\mathcal{M}_3 \subset \mathcal{M}_2$, it is again sufficient to consider trajectories in $\mathcal{M}_2$, i.e., to use the assumptions $k_1^2 |z_k| \le (\umax+\wmax+\delta)^2$ and $|v_k| \le U$ \changed{for all $k$}.
    Let $\epsilon = \frac{k_1^2}{2} \frac{\umax - \wmax - k_2 T}{\umax + \wmax + \delta} - k_2 > 0$.
    It will be shown that $\bar u_k > \umax$ implies $\bar u_{k-1} \ge \bar u_k + T \epsilon$; the claim then follows due to symmetry reasons.
    To see this, \changed{use $u_k \changed{= \sat_U(\bar u_k)} = \umax$ and \eqref{eq:zkp1:2} to obtain
    $
    z_{k+1} \ge z_{k} + T (U - W - k_2 T).
$}
        \changed{Then,} $\max\{ \apowf{z_k}{1}{2}, \apowf{z_{k+1}}{1}{2} \} \le \frac{\umax + \wmax + \delta}{k_1}$ and Lemma~\ref{lem:auxlem} \changed{imply}
    \begin{align}
        \changed{\spowf{z_{k+1}}{1}{2} }
&\changed{\ge
        \spowf{z_{k}}{1}{2} +} \frac{k_1 T}{2} \frac{\umax - \wmax - k_2 T}{\umax + \wmax + \delta}.
    \end{align}
    Thus, \changed{evaluating $\bar u_{k-1}$ and $\bar u_k$ using} \eqref{eq:implc:u} yields
    \begin{align}
        \bar u_{k-1} &= \bar u_k - k_1 ( \spowf{z_k}{1}{2} - \spowf{z_{k+1}}{1}{2} ) \nonumber \\
        &\quad- 2 (v_{k+1} - v_k) + (v_k - v_{k-1}) \nonumber \\
        &\ge \bar u_k + \frac{k_1^2 T}{2} \frac{\umax - \wmax - k_2 T}{\umax + \wmax + \delta}  - k_2 T\nonumber \\
        &\quad - (v_{k+1} - v_k) + (v_k - v_{k-1}) \nonumber \\
        \label{eq:ukm1}
        & = \bar u_k + T \epsilon + \gamma_k
    \end{align}
    with the abbreviation $\gamma_k = (v_k - v_{k-1}) - (v_{k+1} - v_k)$.
    Consequently, $\bar u_{k-1} \ge \umax - 2 k_2 T$.
    Distinguish cases $v_{k-1} < \umax - 2 k_2 T$ and $v_{k-1} \ge \umax - 2 k_2 T$.
    In the first case, it will be shown that $v_k - v_{k-1} \ge k_2 T$, which implies $\gamma_k \ge 0$ and allows to conclude $\bar u_{k-1} \ge \bar u_k + T \epsilon$ from \eqref{eq:ukm1}.
    To see this, suppose to the contrary that $v_k - v_{k-1} = c k_2 T$ with some $c < 1$.
    Then, $\bar u_{k-1} \ge \umax - (1 - c) k_2 T$ \changed{and $2 v_k - v_{k-1} - u_{k-1} \ge 0$ follow from \eqref{eq:ukm1} and \eqref{eq:implc:v}, respectively.}
    \changed{The former implies $u_{k-1} = \sat_U(\bar u_{k-1}) \ge U - (1-c) k_2 T$ and the latter yields $v_k \ge (u_{k-1} + v_{k-1})/2$, leading to the contradiction}
\begin{align}
        v_k - v_{k-1} &\ge \changed{\frac{u_{k-1} - v_{k-1}}{2} \ge \frac{- (1-c) k_2 T + 2 k_2 T }{2}} \nonumber \\
        &= \frac{1+c}{2} k_2 T > c k_2 T.
    \end{align}
    In the second case, the relation $\bar u_{k-1} \ge \umax - 2 k_2 T$ implies $u_{k-1} \ge \umax - 2 k_2 T$, and \eqref{eq:implc:explicit:v} yields the inequality $v_{k} \ge \min\{ v_{k-1}, u_{k-1}\} \ge \umax - 2 k_2 T$.
\changed{Since $u_k = U$ and $u_{k-1}, v_k, v_{k-1} \in [U- 2k_2 T, U]$, one may conclude $|v_{k-1} - u_{k-1}| \le 2 k_2 T$ and $|v_k - u_k| \le 2 k_2 T$}.
    \changed{By applying} \eqref{eq:implc:explicit:v} \changed{three times}, $\gamma_k$ may \changed{then} be bounded as
    \begin{align}
        \gamma_k &= \frac{u_{k-1} - v_{k-1}}{2} - \frac{u_k - v_k}{2} = \frac{u_{k-1} - \umax}{2} + \frac{v_{k} - v_{k-1}}{2} \nonumber \\
        &= \frac{u_{k-1} - \umax}{2} + \frac{u_{k-1} - v_{k-1}}{4} \ge \frac{3}{4} (u_{k-1} - \umax).\hspace{-1.15ex}
    \end{align}
    If $u_{k-1} = \umax$, then $\gamma_k \ge 0$ and $\bar u_{k-1} \ge \bar u_k + T \epsilon$ follows from \eqref{eq:ukm1}. Otherwise, $u_{k-1} = \bar u_{k-1}$, leading to
    \begin{equation}
        \bar u_{k-1} \ge \bar u_k + T \epsilon + \frac{3}{4} \bar u_{k-1} - \frac{3}{4} \umax \ge \frac{\bar u_k}{4} + T \epsilon + \frac{3}{4} \bar u_{k-1}.
    \end{equation}
    \changed{in \eqref{eq:ukm1};} solving for $\bar u_{k-1}$ yields $\bar u_{k-1} \ge \bar u_k + 4 T \epsilon$.
\qed

\endgroup

\bibliographystyle{elsarticle-harv}
\bibliography{literature}           

\end{document}